\documentclass[sigconf,nonacm]{acmart}

\renewcommand\footnotetextcopyrightpermission[1]

\usepackage{hyperref}       % hyperlinks
\usepackage{url}            % simple URL typesetting
\usepackage{booktabs}       % professional-quality tables
\usepackage{amsfonts}       % blackboard math symbols
\usepackage{multirow}       % for multirow in tables
\usepackage{graphicx}
\usepackage{amsfonts}       % blackboard math symbols
\usepackage{nicefrac}       % compact symbols for 1/2, etc.
\usepackage{microtype}      % microtypography
\usepackage{xcolor}         % colors

\usepackage{amsmath}
\usepackage{adjustbox}

\usepackage{enumitem}
\usepackage{algorithm}
\usepackage{algorithmic}
\usepackage{tablefootnote}
\usepackage{subcaption}
\usepackage{graphicx}
\usepackage{makecell}
\usepackage{setspace}
\usepackage[font=small]{caption}

\newcommand{\model}{{{\tt FeDecider}}}
\AtBeginDocument{%
  }

\setcopyright{acmlicensed}
\copyrightyear{2026}
\acmYear{2026}
\setcopyright{cc}
\setcctype{by}
\acmConference[WWW '26]{Proceedings of the ACM Web Conference 2026}{April 13--17, 2026}{Dubai, United Arab Emirates}
\acmBooktitle{Proceedings of the ACM Web Conference 2026 (WWW '26), April 13--17, 2026, Dubai, United Arab Emirates}
\acmPrice{}
\acmDOI{10.1145/3774904.3792294}
\acmISBN{979-8-4007-2307-0/2026/04}

\begin{document}

%%
%% The "title" command has an optional parameter,
%% allowing the author to define a "short title" to be used in page headers.
\title{FeDecider: An LLM-Based Framework for Federated Cross-Domain Recommendation}

%%
%% The "author" command and its associated commands are used to define
%% the authors and their affiliations.
%% Of note is the shared affiliation of the first two authors, and the
%% "authornote" and "authornotemark" commands
%% used to denote shared contribution to the research.
\author{Xinrui He}
\affiliation{%
  \institution{University of Illinois Urbana-Champaign}
  \city{Champaign}
  \state{IL}
  \country{USA}
}
\email{xhe33@illinois.edu}

% \author{Lars Th{\o}rv{\"a}ld}
% \affiliation{%
%   \institution{The Th{\o}rv{\"a}ld Group}
%   \city{Hekla}
%   \country{Iceland}}
% \email{larst@affiliation.org}

\author{Ting-Wei Li}
\affiliation{%
  \institution{University of Illinois Urbana-Champaign}
  \city{Champaign}
  \state{IL}
  \country{USA}
}
\email{twli@illinois.edu}

\author{Tianxin Wei}
\affiliation{%
  \institution{University of Illinois Urbana-Champaign}
  \city{Champaign}
  \state{IL}
  \country{USA}
}
\email{twei10@illinois.edu}

\author{Xuying Ning}
\affiliation{%
  \institution{University of Illinois Urbana-Champaign}
  \city{Champaign}
  \state{IL}
  \country{USA}
}
\email{xuying2@illinois.edu}

\author{Xinyu He}
\affiliation{%
  \institution{University of Illinois Urbana-Champaign}
  \city{Champaign}
  \state{IL}
  \country{USA}
}
\email{xhe34@illinois.edu}
% \email{cpalmer@prl.com}

\author{Wenxuan Bao}
\affiliation{%
  \institution{University of Illinois Urbana-Champaign}
  \city{Champaign}
  \state{IL}
  \country{USA}
}
\email{wbao4@illinois.edu}
% \email{jsmith@affiliation.org}

\author{Hanghang Tong}
\affiliation{%
  \institution{University of Illinois Urbana-Champaign}
  \city{Champaign}
  \state{IL}
  \country{USA}
}
\email{htong@illinois.edu}
% \email{jpkumquat@consortium.net}

\author{Jingrui He}
\affiliation{%
  \institution{University of Illinois Urbana-Champaign}
  \city{Champaign}
  \state{IL}
  \country{USA}
}
\email{jingrui@illinois.edu}
% \email{jpkumquat@consortium.net}

%%
%% By default, the full list of authors will be used in the page
%% headers. Often, this list is too long, and will overlap
%% other information printed in the page headers. This command allows
%% the author to define a more concise list
%% of authors' names for this purpose.
\renewcommand{\shortauthors}{Xinrui He et al.}
%% No italics, no superscripts, not anonymous
%% Use footnote or author note to identify equal contribution and/or contact author info

%%
%% The abstract is a short summary of the work to be presented in the
%% article.
\begin{abstract}
Federated cross-domain recommendation (Federated CDR) aims to collaboratively learn personalized recommendation models across heterogeneous domains while preserving data privacy. Recently, large language model (LLM)-based recommendation models have demonstrated impressive performance by leveraging LLMs' strong reasoning capabilities and broad knowledge. However, adopting LLM-based recommendation models in Federated CDR scenarios introduces new challenges. First, there exists a risk of overfitting with domain-specific local adapters. The magnitudes of locally optimized parameter updates often vary across domains, causing biased aggregation and overfitting toward domain-specific distributions. Second, unlike traditional recommendation models (e.g., collaborative filtering, bipartite graph-based methods) that learn explicit and comparable user/item representations, LLMs encode knowledge implicitly through autoregressive text generation training. This poses additional challenges for effectively measuring the cross-domain similarities under heterogeneity. To address these challenges, we propose an LLM-based framework for federated cross-domain recommendation, \model. Specifically, \model\ tackles the challenge of scale-specific noise by disentangling each client’s low-rank updates and sharing only their directional components. To handle the need for flexible and effective integration, each client further learns personalized weights that achieve the data-aware integration of updates from other domains. Extensive experiments across diverse datasets validate the effectiveness of our proposed \model. The implementation is available at \url{https://github.com/Xinrui17/FeDecider}.
\end{abstract}

%%
%% The code below is generated by the tool at http://dl.acm.org/ccs.cfm.
%% Please copy and paste the code instead of the example below.
%%
\begin{CCSXML}
<ccs2012>
<concept>
<concept_id>10002951.10003317.10003347.10003350</concept_id>
<concept_desc>Information systems~Recommender systems</concept_desc>
<concept_significance>500</concept_significance>
</concept>
</ccs2012>
\end{CCSXML}
\ccsdesc[500]{Information systems~Recommender systems}

%%
%% Keywords. The author(s) should pick words that accurately describe
%% the work being presented. Separate the keywords with commas.
\keywords{Cross-Domain Recommendation, LLM-based Recommendation, Federated Recommendation, Privacy-Preserving}
%% A "teaser" image appears between the author and affiliation
%% information and the body of the document, and typically spans the
%% page.

% \received{20 February 2007}
% \received[revised]{12 March 2009}
% \received[accepted]{5 June 2009}

%%
%% This command processes the author and affiliation and title
%% information and builds the first part of the formatted document.
\maketitle

\vspace{-10pt}
\section{Introduction}
\label{sec:introduction}

In real-world recommendation scenarios \cite{10.1145/3460231.3474268,10.1145/3583780.3615039}, privacy regulations on customer data, including full shopping history and personal information and institutional boundaries \cite{illman2019california} often hinder centralized data collection across platforms, leaving valuable cross-domain information fragmented. Federated cross-domain recommendation, which integrates federated learning \cite{mcmahan2017communication, 10.1145/3589334.3645626} into CDR, provides a promising alternative by training models collaboratively across decentralized clients without exposing raw user data. Many recent studies have explored this direction in the context of traditional recommendation models \cite{chen2022differential,chen2023win}. For example, FedCT \cite{liu2021fedct} adopts a variational inference framework to maximize mutual information across domains; FedGCDR \cite{yang2024federated} enhances privacy and reduces negative transfer through feature mapping and graph neural networks. However, these approaches primarily focus on rigid id-based user or item representation mapping under federated constraints and are typically built on simple collaborative filtering architectures \cite{ricci2010introduction, he2017neural,xue2017deep,wang2019neural,9338366}, which struggle to leverage rich textual information and generalize to new users or items.

\begin{figure}[t]
  \centering
  \includegraphics[width=0.45\textwidth]{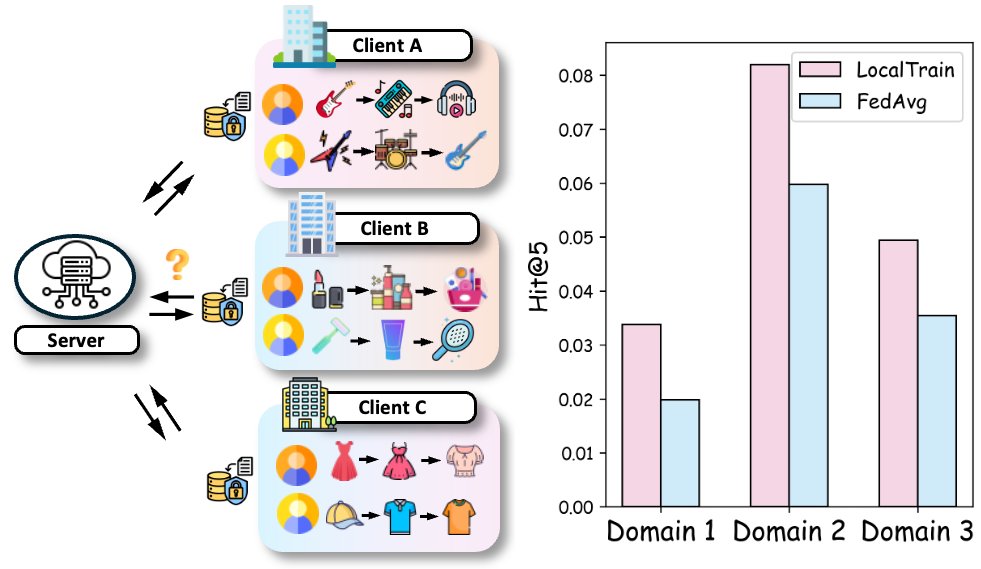}
  \vspace{-4pt}
\caption{\textbf{Left: Illustration of the federated cross-domain recommendation. Right: Comparison of performance (Hit@5) between local training and FedAvg aggregation across 3 domains in Goodreads.}
}
\vspace{-10pt}
  \label{fig:domain_hetero}
\end{figure}

Recently, LLM for recommendation has achieved great success by leveraging its strong reasoning ability in understanding user intent and its broad knowledge for supporting more contextually personalized outputs \cite{li2023text, rajput2023recommender, wang2024learnable, 10.1145/3705328.3748077}. 
As illustrated in Figure~\ref{fig:domain_hetero} (left), different clients represent distinct domains (e.g., Instruments, Beauty, Clothing), where sequential dependencies among items may exhibit similar contextual patterns. For example, the transition from core products to related accessories in one domain can benefit models in another by providing analogous contextual cues.
However, as shown in Figure~\ref{fig:domain_hetero} (right), directly adopting LLM-based recommendation models in federated cross-domain scenario via a standard federated learning method (e.g., FedAvg \cite{mcmahan2017communication}) which averages the uploaded parameters of all clients results in significant performance degradation in all domains compared to local training in the GoodReads dataset \cite{wan2018item}.

This highlights two challenges of transferring and integrating useful knowledge in LLM-based federated recommendation under domain heterogeneity: \textbf{(1) Risk of overfitting with domain-specific local adapters.} Federated LLM-based recommender typically exchanges only partial model parameters, such as LoRA adapters, instead of full-model updates. While efficient, the magnitude (scale) of these locally optimized parameter updates vary significantly across domains, causing biased aggregation and overfitting to local data distributions.
\textbf{(2) Difficulty in cross-domain similarity measurement.} Backbone LLMs model contextual dependencies rather than explicit user-item relations through autoregressive text generation training, unlike traditional recommenders that learn explicit user/item embeddings, comparable representations. This implicit representation hinders direct comparison and aggregation across domains, thereby necessitating data-aware integration to enable effective cross-domain coordination. \textbf{Figure~\ref{fig:challenge}} empirically shows these two challenges, where both aggregation strategies suffer consistent performance degradation across domains due to scale-specific noise and ineffective similarity measurement.

To address these challenges, we propose \model, an LLM-based framework for federated cross-domain recommendation. As shown in Figure~\ref{fig_overview}, each client (domain) first represents its update as a Low-Rank Adaptation (LoRA) module \cite{hu2022lora} and uploads it to the server. Then, to address the first challenge, the \model\ server \textit{extracts only the directional components of the LoRA updates}, which are later broadcast to all clients. This design removes scale-wide noise while preserving the domain-specific pattern.
After each client receives the lightweight LoRA modules from other clients, \model\ clients perform \textit{data-aware integration}. Specifically, clients \textit{learn a set of personalized weights to combine the received directional components with its own}. This mechanism addresses the second challenge by allowing clients to selectively incorporate external knowledge and automatically adapting to their local data distributions. By enabling client-specific integration, \model\ supports flexible knowledge fusion even when semantic representations vary across domains.
The final personalized model is constructed as a weighted combination of all directional components, enabling effective and flexible knowledge sharing under domain heterogeneity.

The major contributions of this paper are summarized as follows:

\begin{figure}[t]
  \centering
  \includegraphics[width=0.40\textwidth]{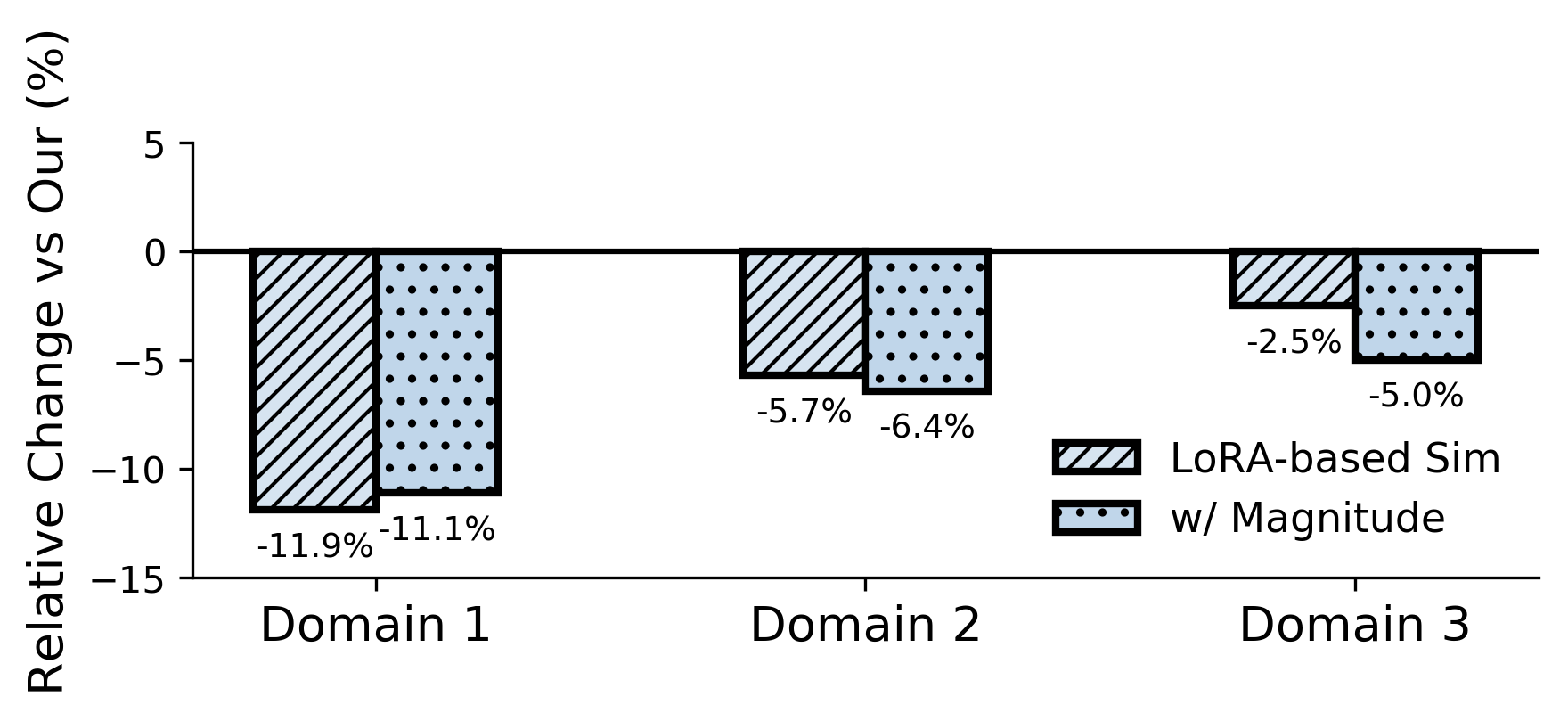}
  \vspace{-5pt}
\caption{\textbf{Performance degradation (NDCG@5) of two baseline aggregation strategies across domains in GoodReads dataset, illustrating the challenges of scale-specific noise and ineffective cross-domain similarity measurement with LoRA. The aggregation weights of the baseline LoRA-based Sim are computed using the layer-wise cosine similarity of the LoRA module.}
}
\vspace{-15pt}
  \label{fig:challenge}
\end{figure}

% \vspace{-0.3cm}
\begin{itemize}[leftmargin=10pt]
    \item We introduce \model, a novel LLM-based federated cross-domain recommendation model that addresses the unique challenges when adopting LLM-based recommendation models in federated CDR, including (i) the risk of overfitting with domain-specific local adapters and (ii) the intrinsic difficulty of cross-domain similarity computation.

    \item \model \textit{only communicates directional components of LoRA updates} across clients to efficiently and effectively transfer the knowledge embedded in the LLM-based recommendation models. Then, each client \textit{performs data-aware integration with locally learnable personalized weights}, allowing fine-grained integration.
  
    \item Experiments on multiple cross-domain recommendation datasets confirm the strong performance and robustness of \model.
\end{itemize}

\vspace{-5pt}
\section{Preliminaries}
\label{sec:preliminary}
\vspace{-5pt}
\subsection{Cross-Domain Recommendation}
In practical settings, cross-domain recommendation typically involves a small number of closely related domains (e.g., books, movies, music), each with its own item space. Despite being disjoint, these domains often share latent semantics such as user preferences and content themes that can benefit knowledge transfer across domains. We consider a federated cross-domain recommendation task with \( K \) clients, where each client \( \mathcal{D}_i \) represents a distinct domain for \( i = 1, 2, \ldots, K \). Each domain maintains its own user set \( \mathcal{U}_i \), item pool \( \mathcal{V}_i = \{v_{i,1}, v_{i,2}, \ldots\} \), and interaction dataset \( \mathcal{R}_i \subseteq \mathcal{U}_i \times \mathcal{V}_i \). Different from general federated learning, which learns a single model across all clients (domains), federated cross-domain recommendation addresses domain heterogeneity by learning a personalized model \(W_i\) with \( W_i = \Theta + \Delta W_i \) for each client \( \mathcal{D}_i \). To accommodate client-side resource constraints and reduce communication overhead, we adopt a shared LLM backbone with frozen parameters \( \Theta \), and fine-tune lightweight, client-specific LoRA modules \( \Delta W_i \) on local data. The training objective is to minimize the negative log-likelihood of next-item prediction:
\begin{equation*}
\label{eq:1}
    \min_{\{ \Delta W_i \}_{i=1}^K} \ \frac{1}{K} \sum_{i=1}^K \mathbb{E}_{(u, s_u, v^+) \sim \mathcal{R}_i} \left[ -\log P_{W_i}(v^+ \mid s_u) \right],
\end{equation*}

where \( s_u \) is the interaction history of user \( u \in \mathcal{U}_i \), and \( v^+ \in \mathcal{V}_i \) is the ground-truth next item.
\begin{figure*}[t]
  \centering

  \includegraphics[width=0.95\textwidth]{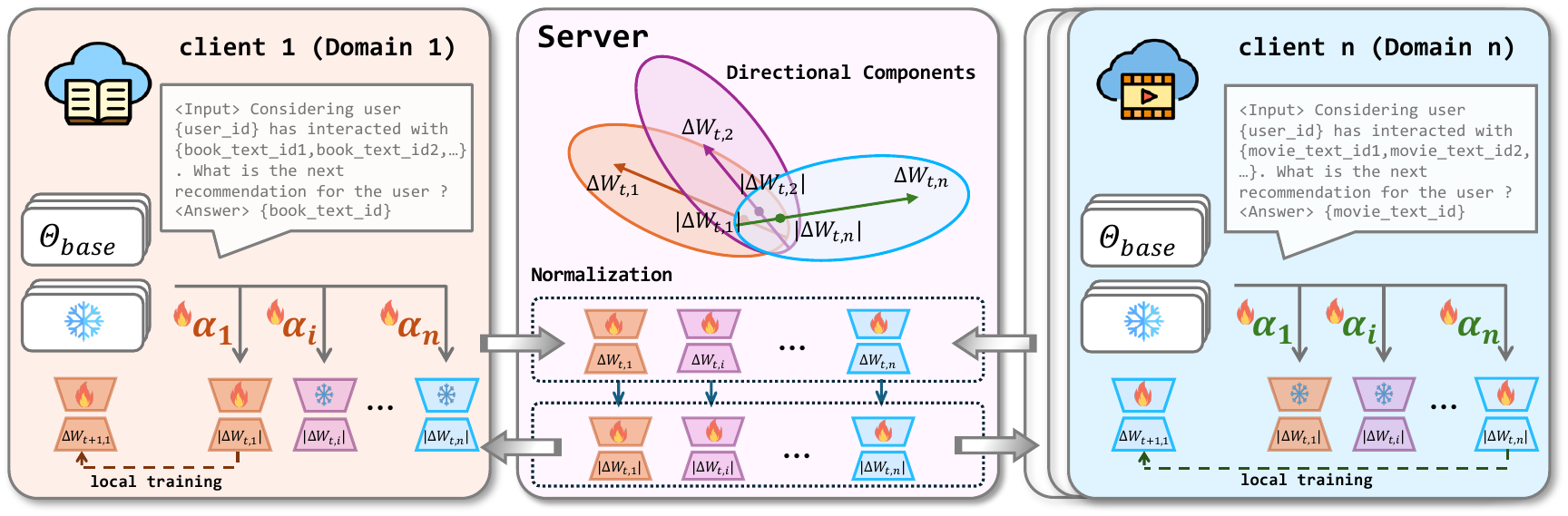}
  \vspace{-10pt}
  \caption{Overview of \model. \textbf{Server (\textit{middle}):} At each federated round, the server collects (\textit{grey arrow}) LoRA updates from all clients, extracts their directional components, and then broadcasts them to all clients (\textit{grey arrow}). \textbf{Clients (\textit{left and right}):} Each client (e.g., Domain 1 for books and Domain n for movies) keeps the base model frozen and integrates the received directional components using a set of personalized weights \(\alpha\), and only updates its own LoRA and \(\alpha\) during local training.}
\vspace{-10pt}
  \label{fig_overview}
\end{figure*}
\vspace{-5pt}
\subsection{Low-Rank Adaptation in LLM Fine-tuning}

Low-Rank Adaptation (LoRA)~\cite{hu2022lora} is a widely applied parameter-efficient tuning method for LLM finetuning. It freezes the original model weights and inserts trainable low-rank matrices into linear layers. For a weight matrix \( W \in \mathbb{R}^{d \times d} \) in LLM, LoRA introduces LLM update as \( \Delta W = B A \), where \( A \in \mathbb{R}^{r \times d} \), \( B \in \mathbb{R}^{d \times r} \) are trainable parameters, and \( r \ll d \). Given an input vector \( x \in \mathbb{R}^d \), the output in inference stage becomes \(\tilde{W} x = (W + B A)x\). Note that only \( A \) and \( B \) are updated during finetuning, which reduces computation costs compared with full finetuning while maintaining downstream performance. In federated LLM fine-tuning, LoRA further reduces communication costs by transmitting only the compact \( A \) and \( B \) matrices rather than full model gradients or weights. Given its wide adoption and efficiency, our framework is built on LoRA-based fine-tuning which significantly reduces communication overhead by updating only a small number of trainable parameters.

\vspace{-5pt}
\section{Method}
In this section, we present the proposed \model~for LLM-based federated cross-domain recommendation. We first introduce the transfer of directional components and then the data-aware integration based on client-specific learnable weights.

\label{sec:method}

\vspace{-5pt}
\subsection{FeDecider}

\paragraph{Communication via Directional Components.} A key challenge in LLM-based federated CDR lies in how clients properly exchange local LoRA updates under domain heterogeneity. Each client's local parameter reflects its unique data distribution and naively utilizing such updates often leads to biased aggregation or even overfitting to domains with dominating parameter magnitude. 
The coupling between the \emph{magnitude} and \emph{direction component} of parameter changes restricts each client's ability to selectively incorporate useful external knowledge while adapting to its own domain characteristics.
To mitigate this issue, we draw inspiration from the decomposition principle introduced in DoRA~\cite{liu2024dora}, which separates a LoRA update into directional and magnitude components. Specifically, we represent each client's low-rank update as \( \Delta W_j = a_j B_j A_j \), where \( A_j \in \mathbb{R}^{r \times d} \) and \( B_j \in \mathbb{R}^{d \times r} \) encode the update direction, and \( a_j \in \mathbb{R} \) is a scalar representing the update magnitude.
To extract the directional components without disrupting the separability of LoRA parameters or increasing communication cost, we adopt a server-side normalization strategy that jointly rescales both matrices:
\[
\tilde{A_j} \leftarrow s \cdot A_j, \quad \tilde{B_j} \leftarrow s \cdot B_j, \quad \text{where} \quad s := \left( \frac{1}{\|B_j A_j\|_F} \right)^{1/2}.
\]
This ensures that \(
\|\tilde{B_j} \tilde{A_j}\|_F = 1
\) and each \( \tilde{B_j} \tilde{A_j} \) represents the client’s normalized update direction. In Appendix~A.1--A.2, we provide an analytical perspective on the effect of separating update scale and direction under heterogeneity.

\paragraph{Data-aware Integration via Learnable Weights} Another challenge arises in measuring cross-domain similarity for effective integration. Specifically, backbone LLMs in federated CDR primarily capture contextual dependencies for recommendation generation rather than explicit user–item relations as in traditional recommenders that learn comparable embeddings across domains. This implicit representation in LLM hinders the assessment of alignment among client-specific updates, often resulting in suboptimal or inconsistent aggregation.
To resolve this, we leverage the previously introduced local directional components to enable client-side adaptive integration. In particular, each client learns personalized weights to selectively incorporate external directions into its own LoRA modules, ensuring that cross-domain information is aligned to its local distribution. Formally, for client \( i \), we learn personalized weights \( \alpha_{ij} \) for directional components of client \( j \in \{1, \ldots, N\} \), forming the update with the following form:
\begin{equation*}
  \Delta W_i = \sum_{j=1}^N \alpha_{ij} \cdot \tilde{B_j} \tilde{A_j}.  
\end{equation*}
Here, each \( \tilde{A_j}, \tilde{B_j} \) is the normalized directional component from client \( j \). During local training, client \( i \) further updates (1) its own directional matrices \( \tilde{A_i}, \tilde{B_i} \), and (2) all personalized weights \( \{ \alpha_{ij} \}_{j=1}^N \). Directional components from other clients (\( j \ne i \)) are kept fixed to preserve shared knowledge. See Appendix \ref{app:personalized_weight} for an intuitive explanation of why gradient updates on $\alpha_{ij}$ tend to decrease weights on locally harmful directions and increase weights on beneficial ones. Finally, the local training objective is:

\noindent
\resizebox{0.88\linewidth}{!}{
\centering
\begin{minipage}{\linewidth}
\begin{equation*}
\label{eq:2}
\begin{aligned}
    \min_{\alpha_{ij}, \tilde{A_i}, \tilde{B_i}} \; F_i &= \mathbb{E}_{(x, y) \sim \mathcal{D}_i} \left[ \mathcal{L} \left(\left(x; W_0 + \Delta W_i\right), y \right) \right]
\text{ ,with } \Delta W_i = \sum_{j=1}^N \alpha_{ij} \cdot \tilde{B_j} \tilde{A_j},
\end{aligned}
\end{equation*}
\end{minipage}
}

\noindent
where \( x \) denotes the generative recommendation input constructed from user interaction history and \( y \) is the target item to be predicted. This optimization enforces that each client personalizes its own update while learning how to incorporate shared knowledge from other clients through learnable weights \( \alpha_{ij} \). We provided a formal analysis on how personalized weights can down-weight the harmful directions in local integration.

\paragraph{Overall Training Process}
The federated training process is summarized in Figure~\ref{fig_overview}. Before communication begins, each client independently trains its own LoRA matrices \( A_i \) and \( B_i \) on local data. At the end of the first local training phase, clients upload their raw LoRA updates to the server. The server then performs normalization to extract directional components \( \tilde{A_j}, \tilde{B_j} \). This step also prevents other clients from recovering the exact original updates, and additional privacy discussion is provided in \ref{privacy_analysis}. The server then broadcasts these directional components to all clients. Upon receiving these directional components, each client constructs a fine-grained integration by learning its personalized set of aggregation weights \( \{ \alpha_{ij} \} \) and updates its own LoRA.

In subsequent communication rounds, each client uploads its updated LoRA to the server, while the server re-applies normalization and redistributes the new directional components. This process is repeated iteratively. After the final communication round, each client’s personalized model is defined as a weighted combination of all directional components using the learned aggregation weights, without any further local fine-tuning. A detailed discussion on communication cost is provided in Section \ref{communication_cost}.

\begin{table*}[ht]

\setlength{\abovecaptionskip}{0.05cm}
\setlength{\belowcaptionskip}{0.2cm}
\caption{Overall performance comparison between the baselines on GoodReads Crime \& Comics \& Children. The best results are highlighted in boldface, and the second-best results are underlined.} \label{tab:main-goodreads}
\setlength{\tabcolsep}{3mm}{
\resizebox{\textwidth}{!}{
\begin{tabular}{c|cccc|cccc|cccc|c}
\toprule
% \hline
% \textbf{Dataset}&\multicolumn{12}{c}{\textbf{Googreads}}\\ \midrule
 \textbf{Domain} &
 %  \multicolumn{4}{c|}{} &
 %  \multicolumn{4}{c|}{} &
 %  \multicolumn{4}{c}{} \\
 % &
  \multicolumn{4}{c|}{\multirow{-1}{*}{\textbf{GoodReads- Crime}}} &
  \multicolumn{4}{c|}{\multirow{-1}{*}{\textbf{GoodReads- Comics}}} &
  \multicolumn{4}{c|}{\multirow{-1}{*}{\textbf{GoodReads- Children}}} & \multicolumn{1}{c}{\textbf{Avg}}\\
\multicolumn{1}{c|}{\textbf{Metric}} &
  \multicolumn{1}{c}{\textbf{H@5}} &
  \multicolumn{1}{c}{\textbf{H@10}} &
  \multicolumn{1}{c}{\textbf{N@5}} &
  \multicolumn{1}{c|}{\textbf{N@10}} &
  \multicolumn{1}{c}{\textbf{H@5}} &
  \multicolumn{1}{c}{\textbf{H@10}} &
  \multicolumn{1}{c}{\textbf{N@5}} &
  \multicolumn{1}{c|}{\textbf{N@10}} &
  \multicolumn{1}{c}{\textbf{H@5}} &
  \multicolumn{1}{c}{\textbf{H@10}} &
  \multicolumn{1}{c}{\textbf{N@5}} &
  \multicolumn{1}{c}{\textbf{N@10}} &
  \multicolumn{1}{c}{\textbf{H@5}}\\   \midrule\midrule
Local train &
  0.0339 &
  0.0426 &
  0.0232 &
  0.0260 &
  0.0820 &
  \underline{0.1071} &
  0.0625 &
  0.0706 &
  0.0494 &
  \underline{0.0624} &
 0.0376 &
  0.0418 & 0.0551
  \\

FedAvg &
  0.0198 &
 0.0252&
 0.0154&
0.0171&
 0.0598&
0.0790&
 0.0458&
0.0520 &
0.0354&
0.0456&
0.0283 &
0.0315& 0.0383
  \\
PFedAvg &
  0.0328 &
  0.0420 &
  0.0223 &
  0.0236 &
  0.0804 &
  0.1006 &
  0.0620 &
  0.0684 &
  0.0486 &
  0.0614 &
  0.0367 &
  0.0409 & 0.0539
  \\

FedProx &
  0.0340 &
  0.0412 &
  0.0233 &
  0.0257 &
  0.0838 &
  0.1056 &
  \underline{0.0661} &
  \underline{0.0738} &
  \underline{0.0500} &
  0.0620 &
  0.0381 &
  0.0420 & \underline{0.0559}
  \\
Ditto &
  \textbf{0.0358} &
  \underline{0.0452} &
  \underline{0.0241} &
  \underline{0.0271} &
  0.0836 &
  0.1036 &
  0.0651 &
  0.0716 &
  0.0482 &
  0.0592 &
  0.0368 &
  0.0403 & \underline{0.0559}
  \\
FFA-LoRA&
  0.0304 &
  0.037 &
  0.0200 &
  0.0222 &
  0.0676 &
  0.0888 &
  0.0517 &
  0.0586 &
  0.0374 &
  0.0468 &
  0.0297 &
  0.0328 & 0.0451
  \\
Rolora &
  0.0322 &
  0.0408 &
  0.0217 &
  0.0245 &
  0.0820 &
  0.1032 &
  0.0635 &
  0.0703 &
  0.0494 & 
  \textbf{0.0640} &
   \underline{0.0378} &
  \underline{0.0425} & 0.0545
  \\
FellRec&
  0.0316 &
  0.0406 &
  0.0211 &
  0.0240 & 
  \underline{0.0840} &
  0.1056 &
  0.0637 &
  0.0706 &
  0.0474 &
  0.0618 &
  0.0359 &
  0.0406 & 0.0543
  \\
FDLoRA &
  0.0322 &
  0.0396 &
  0.0215 &
  0.0239 &
  0.0818 &
  0.1054 &
  0.0620 &
  0.0697 &
  \underline{0.0500} &
  0.0620 &
  0.0373 &
  0.0412 & 0.0547
  
 \\ \midrule
\model &
\underline{0.0352} &
\textbf{0.0478} &
\textbf{0.0252} &
\textbf{0.0292} &
\textbf{0.0862} &
  \textbf{0.1078} &
  \textbf{0.0701} &
  \textbf{0.0771} &
  \textbf{0.0502} &
  0.0620 &
\textbf{0.0401} &
\textbf{0.0439} &
\textbf{0.0572}\\  \bottomrule
\end{tabular}
}}
\end{table*}

\begin{table*}[t]

\setlength{\abovecaptionskip}{0.05cm}
\setlength{\belowcaptionskip}{0.2cm}
\caption{Overall performance comparison between the baselines on Amazon Beauty \& Clothing. The best results are highlighted in boldface, and the second-best results are underlined.} \label{tab:main-amazon}
\setlength{\tabcolsep}{3mm}{
\begin{adjustbox}{width=0.8\textwidth,center} 
\begin{tabular}{c|cccc|cccc|c}
\toprule
% \hline
% \textbf{Dataset}&\multicolumn{12}{c}{\textbf{Googreads}}\\ \midrule
 \textbf{Domain} &
 %  \multicolumn{4}{c|}{} &
 %  \multicolumn{4}{c|}{} &
 %  \multicolumn{4}{c}{} \\
 % &
  \multicolumn{4}{c|}{\multirow{-1}{*}{\textbf{Amazon- Beauty}}} &
  \multicolumn{4}{c|}{\multirow{-1}{*}{\textbf{Amazon- Clothing}}} &
  \multicolumn{1}{c}{\textbf{Avg}}\\
\multicolumn{1}{c|}{\textbf{Metric}} &
  \multicolumn{1}{c}{\textbf{H@5}} &
  \multicolumn{1}{c}{\textbf{H@10}} &
  \multicolumn{1}{c}{\textbf{N@5}} &
  \multicolumn{1}{c|}{\textbf{N@10}} &
  \multicolumn{1}{c}{\textbf{H@5}} &
  \multicolumn{1}{c}{\textbf{H@10}} &
  \multicolumn{1}{c}{\textbf{N@5}} &
  \multicolumn{1}{c|}{\textbf{N@10}} &
  \multicolumn{1}{c}{\textbf{H@5}}\\   \midrule\midrule
Local train &
  0.0172 &
  0.0220 &
  0.0124 &

  0.0139 &
  0.0080 &
  0.0107 &
  0.0057 &
  0.0062 & 0.0126
    \\
FedAvg &
 0.0168 &
 0.0220&
  0.0130&
 \underline{0.0146}&
0.0068 &
 0.0100 &
  0.0054&
 0.0064& 0.0118
  \\
PFedAvg &
  0.0173 &
  0.0210 &
  \underline{0.0132} &
  0.0144 &
  0.0078 &
  0.0105 &
  0.0062 &
  0.0071 & 0.0126
 
  \\
FedProx &
  0.0160 &
  0.0200 &
  0.0121 &
  0.0134 &
  0.0075 &
  0.0100 &
  0.0060 &
  0.0068 & 0.0118

  \\
Ditto &
  0.0158 &
  0.0210 &
  0.0122 &
  0.0140 &
  0.0088 &
  0.0113 &
  0.0067 &
  0.0075 & 0.0123
  
  \\
FFA-LoRA &
  0.0140 &
  0.0163 &
  0.0109 &
  0.0116 &
  0.0073 &
  \underline{0.0120} &
  0.0052 &
  0.0067 & 0.0107
  
  \\
Rolora &
  0.0165 &
  \underline{0.0218} &
  0.0124 &
  0.0141 &
  0.0083 &
  0.0103 &
  0.0065 &
  0.0072 & 0.0124
  
  \\
FellRec &
  \underline{0.0173} &
  0.0210 &
  0.0128 &
  0.0140 &
  0.0085 &
  0.0105 &
  \underline{0.0068} &
  0.0075 & \underline{0.0129}
  
  \\
FDLoRA &
  0.0163 &
  0.0205 &
  0.0123 &
  0.0143 &
  \underline{0.0090} &
  0.0115 &
  \textbf{0.0069} &
  \underline{0.0077} & 0.0127

 \\ \midrule
\model &
\textbf{0.0190} &
\textbf{0.0247} &
\textbf{0.0145} &
\textbf{0.0164} &
\textbf{0.0097} &
  \textbf{0.0161} &
  0.0061 &
  \textbf{0.0084} &
  
\textbf{0.0144}\\  \bottomrule
\end{tabular}
\end{adjustbox}}
\vspace{-5pt}
\end{table*}

\section{Results}
\label{sec:results}

In this section, we present experimental results on several real-world cross-domain datasets to evaluate the performance of \model. Our experiments intend to answer the following research questions:

\begin{itemize}[leftmargin=*]
    \item \textbf{RQ1:} How does \model\ perform compared to existing LLM-based models for federated cross-domain recommendation tasks?
    \item \textbf{RQ2:} To what extent does each individual component in \model\ contribute to the overall performance?
    \item \textbf{RQ3:} How effective are the personalization weights learned by \model\ in capturing domain similarities?
    \item \textbf{RQ4:} How robust is \model\ across user groups with different activity levels and item groups with varying popularity?
\end{itemize}

\vspace{-10pt}
\subsection{Experiment Setup}

\paragraph{Datasets}
We conduct experiments on Goodreads \cite{wan2018item} and the Amazon review dataset \cite{he2016ups}. Specifically, we select three domains, \textit{Crime}, \textit{Comics}, and \textit{Children}, from Goodreads and two pairs of domains from Amazon, which are \textit{Beauty} $\&$ \textit{Clothing} and \textit{Electronics} $\&$ \textit{Phones}, to evaluate the performance of \model~in federated cross-domain recommendation scenarios. The data statistics and preprocessing procedure are introduced in Appendix \ref{appendix:data}, and results of \textit{Electronics} $\&$ \textit{Phones} are provided in Appendix \ref{appendix:add_results}.

\paragraph{Baseline Methods} We compare our approach with seven representative personalized federated learning baselines, including both classic algorithms and recent LoRA-specific federated fine-tuning methods. We applied each baseline to LLM-based cross-domain recommendation tasks. For classic methods, we instantiate their local-update and aggregation rules within our parameter-efficient fine-tuning pipeline by applying them to the LoRA adapter parameters. Below, we briefly summarize each baseline. \textbf{FedAvg} \cite{mcmahan2017communication} is the basic algorithm for federated learning, which performs global aggregation of locally trained models by averaging their parameters.
\textbf{PFedAvg} is a personalized variant of FedAvg that applies local adaptation after the final global aggregation to capture user-specific preferences. \textbf{FedProx} \cite{li2020federated} extends FedAvg with a proximal term to constrain local updates and mitigate client drift caused by data heterogeneity. \textbf{Ditto} \cite{li2021ditto} maintains both a global and a personalized model on each client using bi-level optimization to improve personalization. \textbf{FFA-LoRA}\cite{sun2024improving} freezes LoRA-A matrices and only fine-tunes LoRA-B to address the instability of LoRA aggregation in FL.
% caused by the mismatch between local client updates and global aggregation
\textbf{Rolora} \cite{chenrobust} improves LoRA-based federated fine-tuning via alternating aggregation for better robustness and efficiency. \textbf{FellRec} \cite{zhao2025federated} introduces dynamic balancing and flexible storage to improve the performance of LLM-based federated recommendation. \textbf{FDLoRA} \cite{qi2024fdlora} employs dual LoRA modules to separate global and personalized learning and uses adaptive fusion to balance performance and efficiency.

\paragraph{Implementation Details}

We adopt IDGenRec~\cite{tan2024idgenrec} as the backbone model. During the pre-training stage, we jointly fine-tune the base recommendation model and the ID generator using a small subset of data randomly sampled from the interactions that are not included in any local client. Specifically, we use 2{,}000 interactions for Amazon and 3{,}000 interactions for Goodreads. For the pre-training, we follow the default hyperparameters provided in the original IDGenRec implementation. Considering that clients may have limited resources to store the full model, we follow SLMRec~\cite{xu2025slmrec} and distill a smaller model with half the parameter size using the same pre-training dataset, which serves as the base model for all clients. Details of the distillation process are provided in Appendix \ref{storage_efficiency}. After pre-training, the ID generator is fixed and shared across all clients without further tuning. For client fine-tuning, LoRA layers are inserted into all attention layers of the backbone model. For a fair comparison, all methods perform one round of federated communication after every five local training epochs, totaling 20 rounds on Goodreads and 25 rounds on Amazon. The local batch size is set to 64, and the learning rate is 1e-3. For our method, the personalization weights are initialized to 2 and optimized with a learning rate of 1e-3. All experiments are conducted on a machine with 4 NVIDIA A100 GPUs.

\vspace{-5pt}
\subsection{Experimental Results}

\subsubsection{Main Results (RQ1)}
As shown in Table~\ref{tab:main-goodreads} and Table~\ref{tab:main-amazon}, \model{} consistently outperforms all baseline methods across most evaluation metrics on both the Goodreads and Amazon   datasets. In particular, our model achieves the highest performance on all NDCG@10 metrics. The detailed analysis yields the following observations:

(1) Compared to existing frameworks, including both general PFL methods and LoRA-specific frameworks, our approach demonstrates superior performance, \textit{confirming the benefit of our data-aware integration and fine-grained control over shared knowledge}. (2) Personalized baselines such as FedProx, Ditto, and even purely local training perform competitively in some domains, reflecting the advantage of preserving local models under significant domain shifts. This highlights the \textit{importance of balancing personalization and knowledge sharing} in federated CDR. (3) FFA-LoRA and RoLoRA, though designed to mitigate interference and instability during LoRA aggregation, show limited performance, potentially due to reduced information exchange across clients. (4) FellRec achieves good results in several domains, supporting \textit{the utility of weighted aggregation under client heterogeneity}. (5) \textit{More fine-grained shared knowledge is needed}. FDLoRA shows promising performance in domains such as Amazon-Clothing and Goodreads-Children by maintaining personalized LoRA modules and employing an adaptive fusion strategy, but it captures shared knowledge at a coarser granularity than \model. (6) We further compare \model~with ID-based federated cross-domain recommendation frameworks that represent users and items solely through IDs. The results presented in Section \ref{sec:id_baselines} show that \textit{the incorporation of textual information through LLM significantly boosts performance}, underscoring the potential of LLM-based approaches in federated CDR tasks.

\begin{table*}[t]
\setlength{\abovecaptionskip}{0.05cm}
\setlength{\belowcaptionskip}{0.2cm}
\caption{Ablation study of \model\ on GoodReads Crime \& Comics \& Children. The best results are highlighted in boldface, and the second-best results are underlined.} \label{tab:ablation-goodreads}
\setlength{\tabcolsep}{3mm}{
\resizebox{\textwidth}{!}{
\begin{tabular}{c|cccc|cccc|cccc|c}
\toprule
% \hline
% \textbf{Dataset}&\multicolumn{12}{c}{\textbf{Googreads}}\\ \midrule
 \textbf{Domain} &
 %  \multicolumn{4}{c|}{} &
 %  \multicolumn{4}{c|}{} &
 %  \multicolumn{4}{c}{} \\
 % &
  \multicolumn{4}{c|}{\multirow{-1}{*}{\textbf{GoodReads- Crime}}} &
  \multicolumn{4}{c|}{\multirow{-1}{*}{\textbf{GoodReads- Comics}}} &
  \multicolumn{4}{c|}{\multirow{-1}{*}{\textbf{GoodReads- Children}}} & \multicolumn{1}{c}{\textbf{Avg}}\\
\multicolumn{1}{c|}{\textbf{Metric}} &
  \multicolumn{1}{c}{\textbf{H@5}} &
  \multicolumn{1}{c}{\textbf{H@10}} &
  \multicolumn{1}{c}{\textbf{N@5}} &
  \multicolumn{1}{c|}{\textbf{N@10}} &
  \multicolumn{1}{c}{\textbf{H@5}} &
  \multicolumn{1}{c}{\textbf{H@10}} &
  \multicolumn{1}{c}{\textbf{N@5}} &
  \multicolumn{1}{c|}{\textbf{N@10}} &
  \multicolumn{1}{c}{\textbf{H@5}} &
  \multicolumn{1}{c}{\textbf{H@10}} &
  \multicolumn{1}{c}{\textbf{N@5}} &
  \multicolumn{1}{c}{\textbf{N@10}} &
  \multicolumn{1}{c}{\textbf{H@5}}\\   \midrule\midrule
PFedAvg (baseline) &
  0.0328 & 
  0.0420 &
  0.0223 &
  0.0236 &
  0.0804 &
  0.1006 &
  0.0620 &
  0.0684 &
  0.0486 &
  0.0614 &
  0.0367 &
  0.0409 & 0.0539
  \\
\textbf{\model\ w/o Decomp} &
0.0338 &
 0.0426&
 0.0224&
 0.0252&
 \underline{0.0836}&
 \textbf{0.1090}&
 0.0656&
 \underline{0.0737}&
 \textbf{0.0510}&
 \textbf{0.0632}&
 \underline{0.0381}&
 0.0420& \underline{0.0561}
  \\
\textbf{\model\ w/o Per} &
 \underline{0.0342} &
 \underline{0.0460}&
 \underline{0.0246}&
 \underline{0.0283}&
 0.0824&
 0.1050&
 \underline{0.0664}&
 \underline{0.0737}&
 0.0476&
 0.0586&
 0.0360&
 0.0396& 0.0547
  \\

  \textbf{\model\ w/o Sep} &
  \underline{0.0342}&
 0.0428&
 0.0227&
 0.0255&
 0.0830&
 0.1066&
 0.0654&
 0.0731&
 0.0480&
 0.0616&
 0.0376&
 \underline{0.0421}& 0.0551
 \\ \midrule
\model &
\textbf{0.0352} &
\textbf{0.0478} &
\textbf{0.0252} &
\textbf{0.0292} &
\textbf{0.0862} &
  \underline{0.1078} &
  \textbf{0.0701} &
  \textbf{0.0771} &
  \underline{0.0502} &
  \underline{0.0620} &
\textbf{0.0401} &
\textbf{0.0439} &
\textbf{0.0572}\\  \bottomrule
\end{tabular}
}}
\label{tab:MainTable}
\vspace{-5pt}
\end{table*}

\subsubsection{Ablation Studies (RQ2)}

To evaluate the impact of key designs in \model, we conduct experiments with three model variants on Goodreads (Table~\ref{tab:ablation-goodreads}). In all settings, for each client \( i \), only the personalization weights \( \{\alpha_{ij}\}_{j=1}^N \) and its own LoRA parameters \( A_i \), \( B_i \) are trainable, while all external parameters from other clients remain fixed. We study the following variants:

\textbf{(1) w/o Decomp:} This variant removes the directional decomposition. The local training objective becomes:

\noindent
\resizebox{0.95\linewidth}{!}{
\begin{minipage}{\linewidth}
\begin{equation*}
\begin{aligned}
    \min_{\alpha_{ij}, A_i, B_i} \; \mathbb{E}_{(x, y) \sim \mathcal{D}_i} \left[  \mathcal{L} \left(\left(x; W_0 + \Delta W_i\right), y \right) \right] 
 \text{ ,with } \Delta W_i & = \sum_{j=1}^N \alpha_{ij} \cdot B_j A_j.
\end{aligned}
\end{equation*}
\end{minipage}
}

This leads to a slight drop in performance, suggesting that separating direction and weight is beneficial for achieving stable and fine-grained control during aggregation. Without directional decomposition, variations in the scale of update directions across clients can introduce noise during the knowledge transferring, which confirms the fisrt challenge we discussed in Section \ref{sec:introduction}.

\textbf{(2) w/o Per:} This setting removes client-specific personalization weights \( \alpha_{ij} \), and aggregates external LoRA directions uniformly:

% \noindent
\resizebox{0.95\linewidth}{!}{
\begin{minipage}{\linewidth}
\centering
\begin{equation*}
\begin{aligned}
    \min_{\tilde{A}_i, \tilde{B}_i} \; \mathbb{E}_{(x,y) \sim \mathcal{D}_i} \left[  \mathcal{L} \left(\left(x; W_0 + \Delta W_i\right), y \right) \right] 
    \text{ ,with } \Delta W_i & = \frac{1}{N} \sum_{j=1}^N \tilde{B}_j \tilde{A}_j.
\end{aligned}
\end{equation*}
\end{minipage}
}

The performance degradation in this setting confirms the importance of personalization in the aggregation process. Uniformly combining directions ignores inter-client differences and fails to capture domain-specific preferences, especially under heterogeneous data distributions.

\textbf{(3) w/o Sep:} This variant removes the aggregation over complete client updates \( A_j B_j \). Instead, it first aggregates LoRA factors \(A_j\) and \(B_j\) separately and then composes the update:

\noindent
\resizebox{0.85\linewidth}{!}{
\begin{minipage}{\linewidth}
\begin{equation*}
\begin{aligned}
    \min_{\alpha_{ij}, A_i, B_i} \; \mathbb{E}_{(x,y) \sim \mathcal{D}_i} \left[ \mathcal{L} \left(\left(x; W_0 + \Delta W_i\right), y \right) \right]
     \text{ ,with }   \Delta W_i & = (\sum_{j=1}^N \alpha_{ij} A_j ) ( \sum_{j=1}^N \alpha_{ij} B_j ). \quad 
\end{aligned}
\end{equation*}
\end{minipage}
}

This results in degraded performance, which may be caused by breaking the structural alignment between \( A_j \) and \( B_j \) from the same client. This separate aggregation introduces semantic inconsistency and noise in the reconstructed updates.

These ablation results collectively demonstrate that each design component plays a critical role in the effectiveness of \model.

\subsubsection{Analysis of Personalization Weights (RQ3)}

\begin{figure}[t]
    \centering
    \begin{subfigure}[b]{0.155\textwidth}
        \centering
        \includegraphics[width=\linewidth]{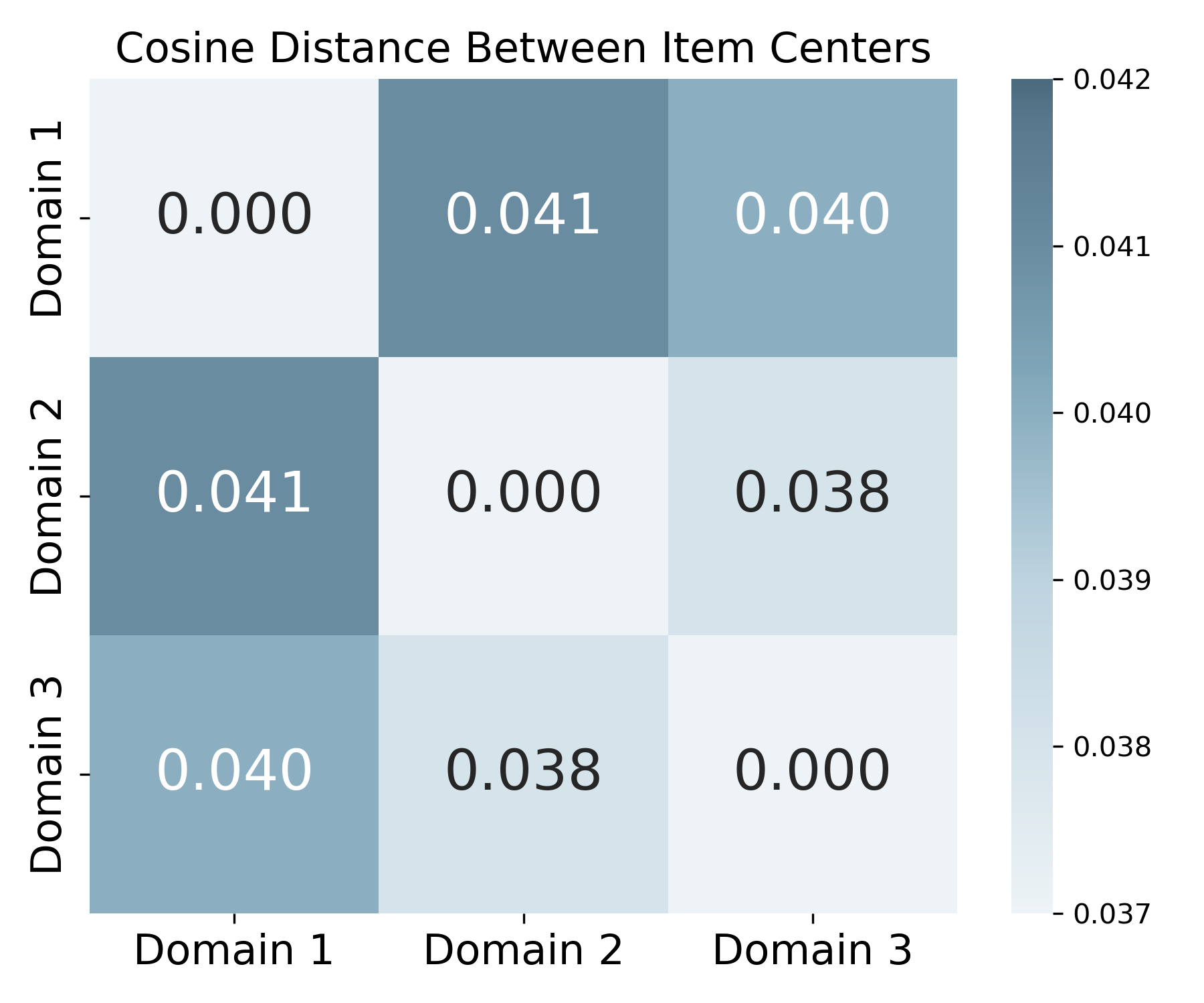}
        % \caption{Cosine Distance Between Item Centers}
       
    \end{subfigure}
    \hfill
    \begin{subfigure}[b]{0.155\textwidth}
        \centering
        \includegraphics[width=\linewidth]{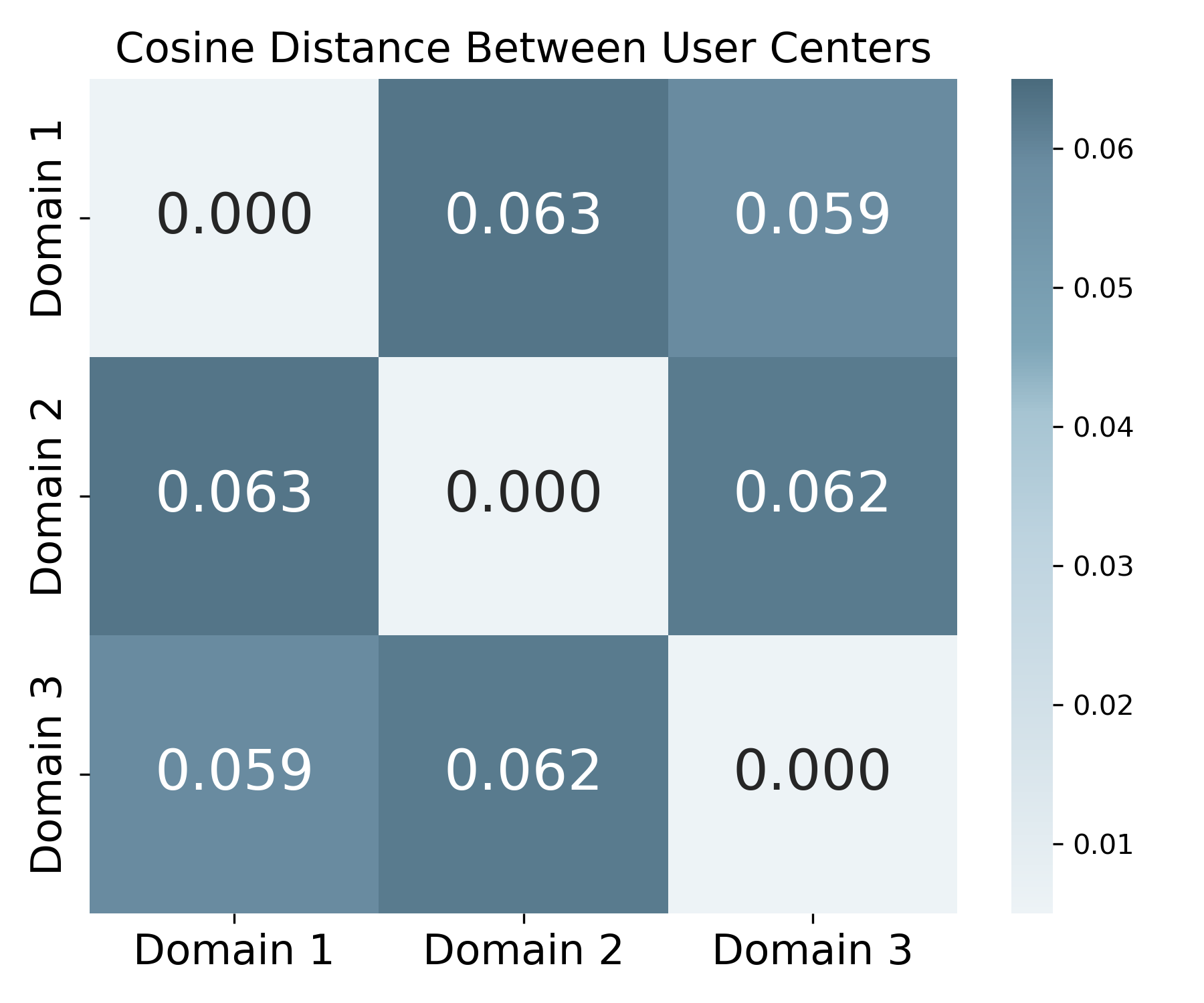}
        % \caption{Cosine Distance Between User Centers}
    
    \end{subfigure}
    \hfill
    \begin{subfigure}[b]{0.155\textwidth}
        \centering
        \includegraphics[width=\linewidth]{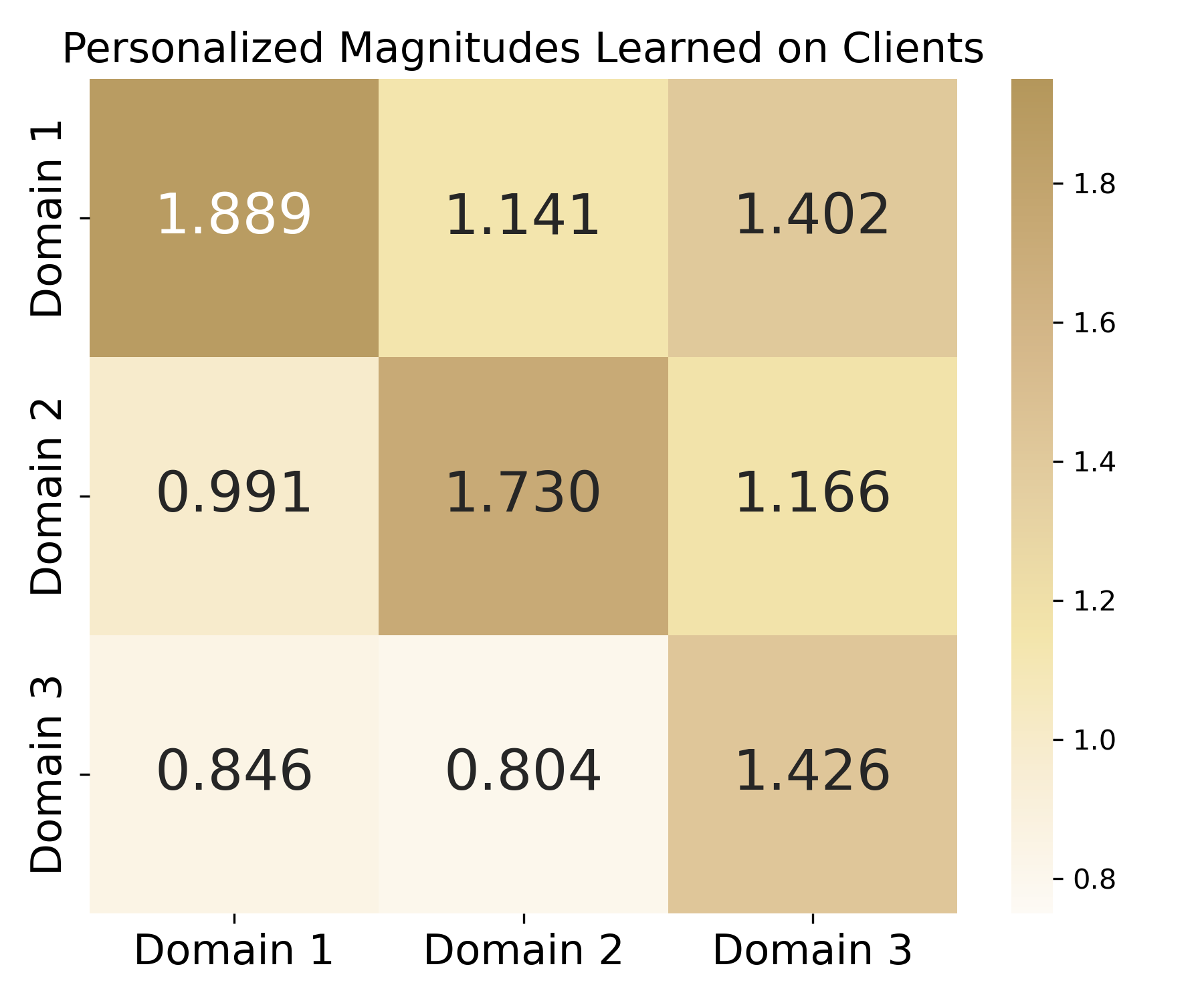}
        % \caption{Personalization Weights \(\alpha\) Learned by Each Domain (Client)}
       
    \end{subfigure}
    \vspace{-15pt}
    \caption{Visualization of item (\textit{left}) and user (\textit{middle}) distances, and learned personalization weights (\textit{right}, where each row represents the personalization magnitudes learned on a specific client/domain with respect to other domains) across domains (clients).}
    \label{fig:domain-similarity}
    \vspace{-15pt}
\end{figure}

\paragraph{Interpretability via Domain-Level User and Item Similarity} We assess whether the learned personalization weights \( \{\alpha_{ij}\} \) reflect meaningful inter-client relationships by comparing them against domain-level semantic similarities on Goodreads. We use the Sentence-BERT model \texttt{gtr-t5-base} to embed textual content in each domain. For item-side comparison, we encode all item descriptions within a domain and compute their mean embedding as the domain-level representation. We then compute cosine distances for all domain pairs based on these representations. For user-side comparison, we embed the textual user IDs and compute domain-level user representations in the same manner.

We visualize three heatmaps: (1) item-level domain cosine distances (which is 1- cosine similarity), (2) user-level domain cosine distances, and (3) the learned personalization weights \( \alpha_{ij} \). As shown in the Figure \ref{fig:domain-similarity}, the learned weights align more closely with the user-level similarity structure, which is consistent with the goal of cross-domain recommendation: transferring and aligning user preference signals across domains. Notably, for domain 3, although item embeddings suggest slightly closer similarity to domain 2, the model assigns a higher weight to domain 1, in line with the user-level similarity. This suggests the model prioritizes user-level alignment in learning cross-domain aggregation weights.

\begin{table}[t]
\setlength{\abovecaptionskip}{0.05cm}
\setlength{\belowcaptionskip}{0.2cm}
\setlength{\tabcolsep}{1.5mm}
\renewcommand{\arraystretch}{1.15}
\centering
\small
\vspace{-5pt}
\caption{Performance with LoRA rank equals 16 on GoodReads Crime \& Comics \& Children.
The best results are in bold.}
\label{tab:ablation_rank}

\begin{tabular}{@{}cccccccc@{}}
\toprule
\multirow{2}{*}{\textbf{r}} & \multirow{2}{*}{\textbf{Method}} &
\multicolumn{2}{c}{\textbf{Crime}} &
\multicolumn{2}{c}{\textbf{Comics}} &
\multicolumn{2}{c}{\textbf{Children}} \\
\cmidrule(lr){3-4}\cmidrule(lr){5-6}\cmidrule(lr){7-8}
 &  & \textbf{H@5} & \textbf{N@5} & \textbf{H@5} & \textbf{N@5} & \textbf{H@5} & \textbf{N@5} \\
\midrule
\midrule
\multirow{3}{*}{16}
 & FedAvg        & 0.0262&0.0187 & 0.0654 & 0.0532 & 0.0430 &0.0342 \\
 & PFedAvg       &0.0330 & 0.0230 & 0.0844& 0.0676 & 0.0504 & 0.0362 \\
 % & Ditto       & & & &  & &  \\
 & \textbf{\model} & \textbf{0.0346} & \textbf{0.0256} & \textbf{0.0868} & \textbf{0.0691} & \textbf{0.0514} & \textbf{0.0379} \\

\bottomrule
\end{tabular}
\vspace{-15pt}
\end{table}

\begin{table*}[h]

\setlength{\abovecaptionskip}{0.05cm}
\setlength{\belowcaptionskip}{0.2cm}
\caption{Overall performance comparison between the ID-based baselines on GoodReads Crime \& Comics \& Children. The best results are highlighted in boldface, and the second-best results are underlined.} \label{tab:classic_method}
\setlength{\tabcolsep}{3mm}{
\resizebox{\textwidth}{!}{
\begin{tabular}{c|cccc|cccc|cccc|c}
\toprule
% \hline
% \textbf{Dataset}&\multicolumn{12}{c}{\textbf{Googreads}}\\ \midrule
 \textbf{Domain} &
 %  \multicolumn{4}{c|}{} &
 %  \multicolumn{4}{c|}{} &
 %  \multicolumn{4}{c}{} \\
 % &
  \multicolumn{4}{c|}{\multirow{-1}{*}{\textbf{GoodReads- Crime}}} &
  \multicolumn{4}{c|}{\multirow{-1}{*}{\textbf{GoodReads- Comics}}} &
  \multicolumn{4}{c|}{\multirow{-1}{*}{\textbf{GoodReads- Children}}} & \multicolumn{1}{c}{\textbf{Avg}}\\
\multicolumn{1}{c|}{\textbf{Metric}} &
  \multicolumn{1}{c}{\textbf{H@5}} &
  \multicolumn{1}{c}{\textbf{H@10}} &
  \multicolumn{1}{c}{\textbf{N@5}} &
  \multicolumn{1}{c|}{\textbf{N@10}} &
  \multicolumn{1}{c}{\textbf{H@5}} &
  \multicolumn{1}{c}{\textbf{H@10}} &
  \multicolumn{1}{c}{\textbf{N@5}} &
  \multicolumn{1}{c|}{\textbf{N@10}} &
  \multicolumn{1}{c}{\textbf{H@5}} &
  \multicolumn{1}{c}{\textbf{H@10}} &
  \multicolumn{1}{c}{\textbf{N@5}} &
  \multicolumn{1}{c}{\textbf{N@10}} &
  \multicolumn{1}{c}{\textbf{H@5}}\\   \midrule\midrule
FedHCDR &
  0.0224 &
  0.0338 &
  0.0175 &
  0.0212 &
  \underline{0.0196} &
  \underline{0.0404} &
  \underline{0.0119} &
  \underline{0.0185} &
  0.0312 &
  0.0460 &
  0.0200 &
  0.0248 & 0.0244
  \\
FedDCSR &
  0.0178 &
  0.0254 &
  0.0104 &
  0.0127 &
  0.0128 &
  0.0308 &
  0.0063 &
  0.0121 &
  0.0152 &
  0.0354 &
  0.0080 &
  0.0147 & 0.0153
  \\
FedGCDR &
  \underline{0.0310} &
  \underline{0.0444} &
  \underline{0.0218} &
  \underline{0.0259} &
  0.0192 &
  0.0310 &
  0.0112 &
  0.0150 &
  \underline{0.0336} &
  \underline{0.0618} &
  \underline{0.0204} &
  \underline{0.0288} & \underline{0.0279}
  \\
  \midrule
\model &
\textbf{0.0352} &
\textbf{0.0478} &
\textbf{0.0252} &
\textbf{0.0292} &
\textbf{0.0862} &
  \textbf{0.1078} &
  \textbf{0.0701} &
  \textbf{0.0771} &
  \textbf{0.0502} &
  \textbf{0.0620} &
\textbf{0.0401} &
\textbf{0.0439} &
\textbf{0.0572}\\  \bottomrule
\end{tabular}
}}
\end{table*}

\paragraph{Convergence of Personalized Weights}

\begin{figure*}[ht]
    \centering
    \begin{subfigure}[b]{0.3\textwidth}
        \centering
        \includegraphics[width=\linewidth]{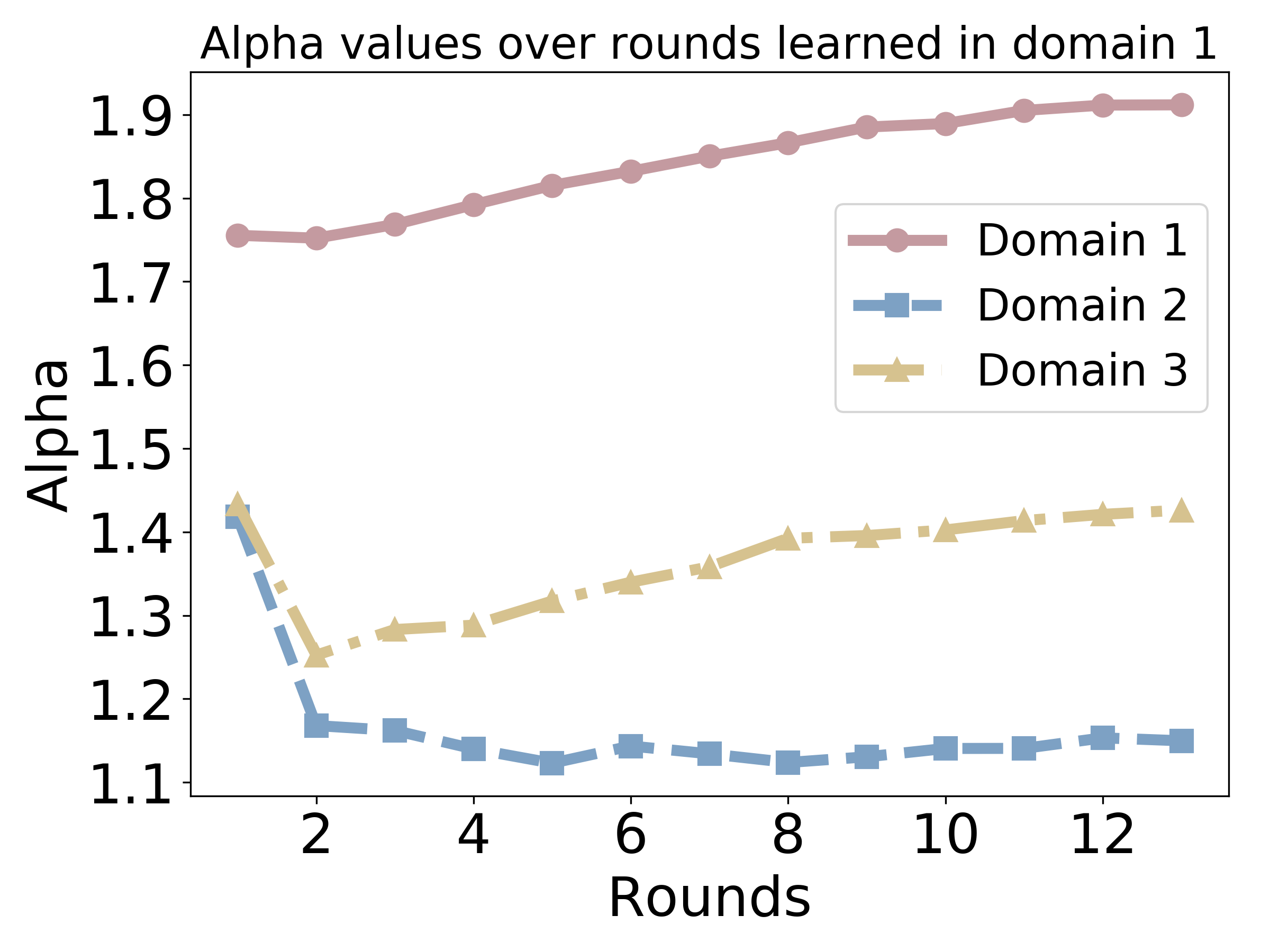}

    \end{subfigure}
    \hfill
    \begin{subfigure}[b]{0.3\textwidth}
        \centering
        \includegraphics[width=\linewidth]{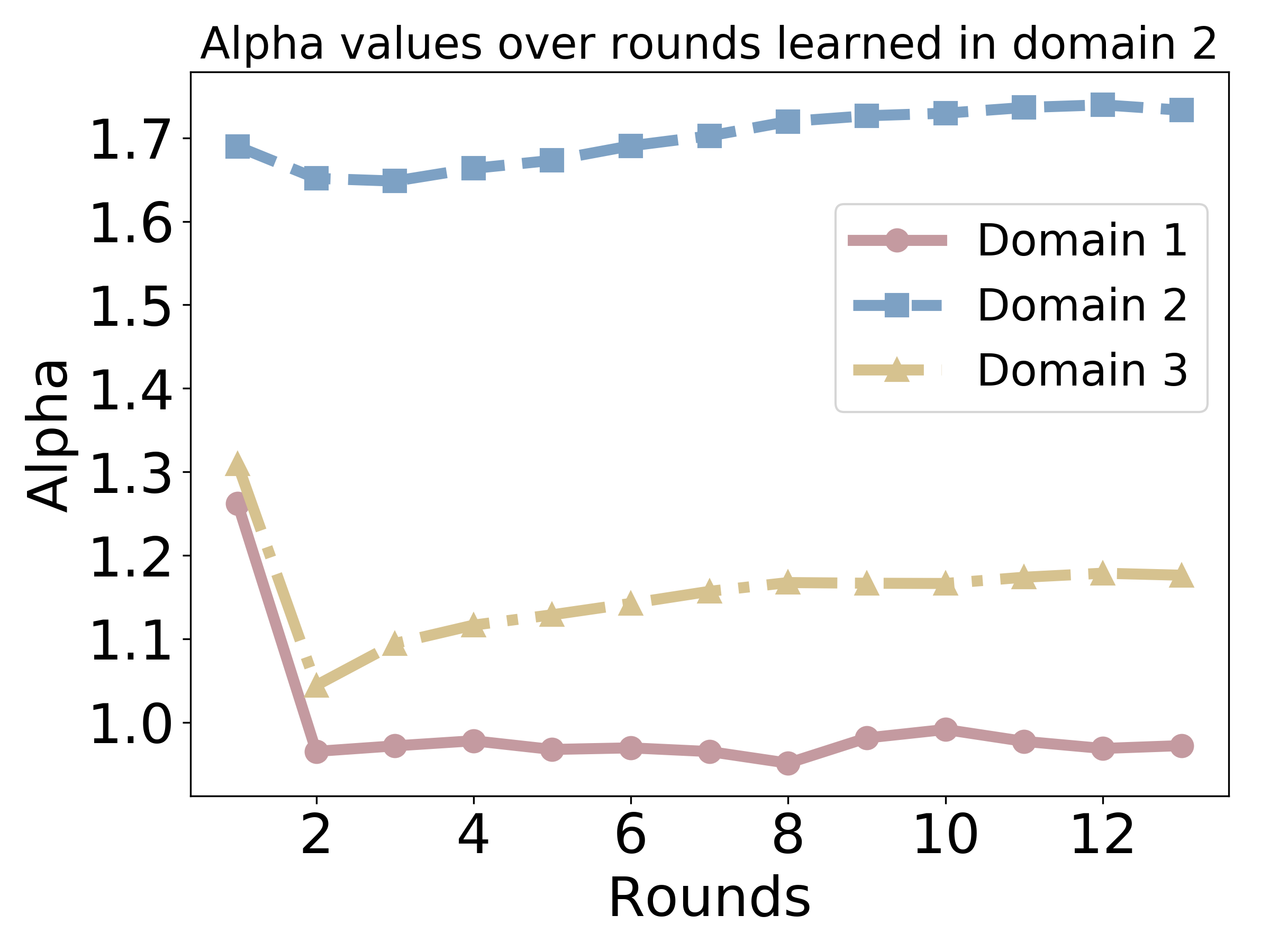}
     
    \end{subfigure}
    \hfill
    \begin{subfigure}[b]{0.3\textwidth}
        \centering
        \includegraphics[width=\linewidth]{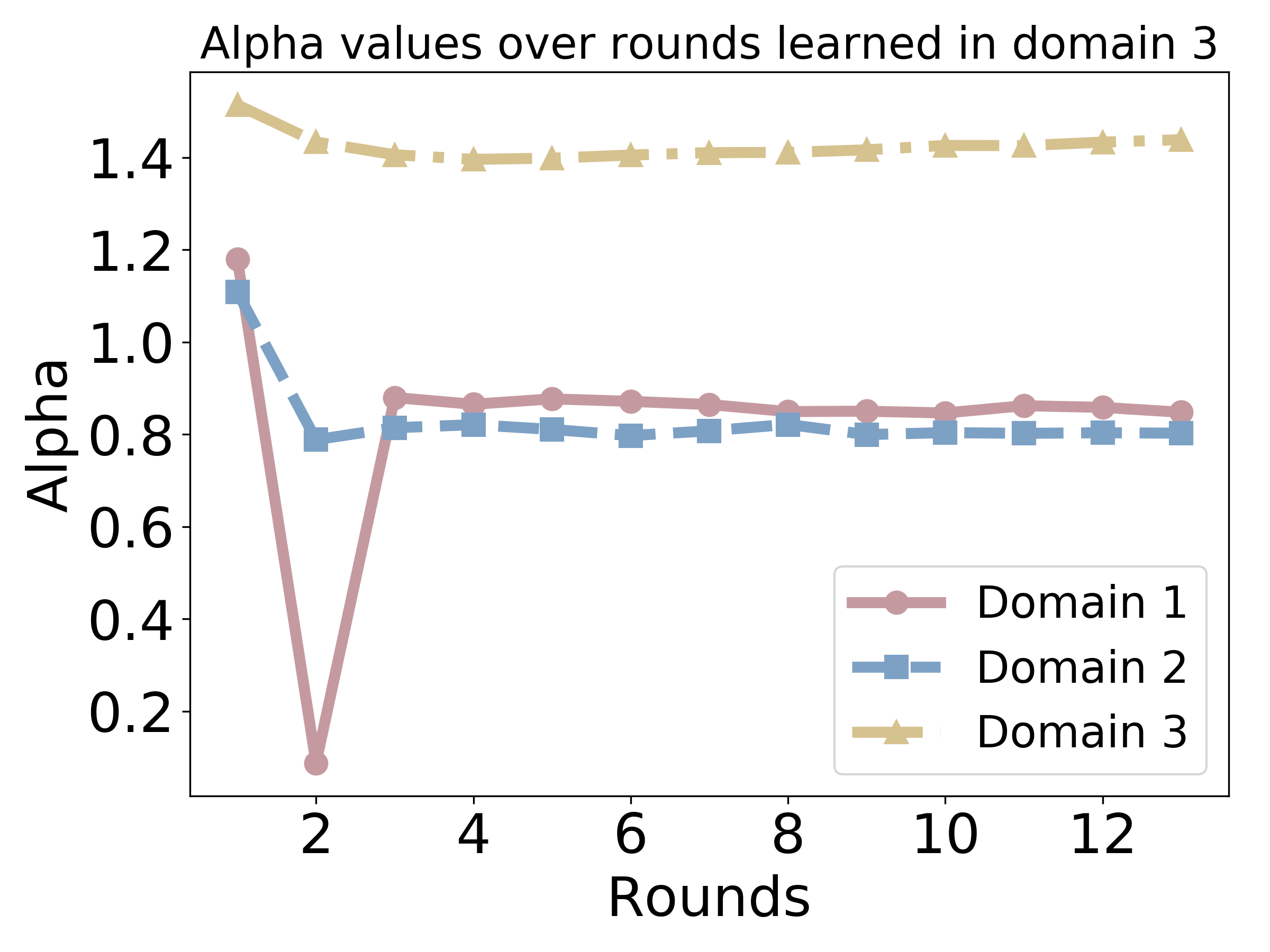}

    \end{subfigure}
    \vspace{-10pt}
    \caption{Convergence of personalized weights learned on each clients (domains) on GoodReads.}
    \label{fig:alpha_convergence}
    \vspace{-10pt}
\end{figure*}

To better understand the behavior of the personalized aggregation mechanism in \model, we examine the convergence of the learned client-specific weights $\{\alpha_{ij}\}$. As shown in Figure~\ref{fig:alpha_convergence}, we visualize the weight evolution over training rounds for three clients. We observe a clear convergence, where each clients consistently learn to promote weights from similar domains and down-weight dissimilar ones. The best performance occurs at communication round 11.

\subsubsection{Performance under Different LoRA Rank.}
In our main experiments, we use a LoRA rank of 8 to achieve more efficient communication. To further evaluate the impact of the LoRA rank $r$ on model performance, we conduct a rank study on the GoodReads \textit{Crime}, \textit{Comics}, and \textit{Children} domains. As shown in Table~\ref{tab:ablation_rank}, \model~consistently outperforms FedAvg and PFedAvg across all domains in both Hit@5 and NDCG@5 metrics when using different $r=16$, which shows the stable performance of \model.

\subsubsection{Performance Across User Activity and Item Popularity Groups (RQ4)}

To analyze model performance under varying user activity levels, we conduct a bucket-based evaluation across all three domains. Specifically, users on each client are ranked by their interaction counts and are evenly divided into five buckets, with each bucket containing 20\% of the users. Bucket 1 represents the least active users, while Bucket 5 contains the most active ones. We report the Hit@5 performance of \model~ and one of the strong baselines PFedAvg within these groups to assess how well the models generalize across different user activity levels. As shown in Figure~\ref{fig:user_bucket}, \model~outperforms PFedAvg in 13 out of 15 user buckets, and with one bucket showing equal performance, highlighting the robustness of \model{} across varying user activity levels.

In addition to user-side analysis, we evaluate model performance with respect to item popularity. Items are grouped into two groups based on their frequency in ground-truth interactions across all clients. Items in the bottom 2\% of all ground-truth items ranked by interaction frequency are categorized as inactive, while the remaining are active. This split reflects the real-world challenge of long-tail items, which are harder to recommend due to limited exposure. As shown in Figure~\ref{fig:domain_user_item}, \model~consistently outperforms PFedAvg across active and inactive item groups in all three domains, which demonstrates \model's ability to generalize to infrequent items by effectively leveraging cross-domain knowledge.

\subsubsection{Comparison with ID-based Methods}
\label{sec:id_baselines}
In this section, we compare \model~with traditional ID-based federated cross-domain recommendation methods. FedHCDR \cite{zhang2024fedhcdr} uses hypergraph filters to separate domain-shared and domain-specific user representations, enhanced by a contrastive learning module. FedDCSR \cite{zhang2024feddcsr} disentangles user sequences into shared and exclusive features and applies contrastive learning to strengthen preference signals. FedGCDR \cite{yang2024federated} extracts transferable knowledge via graph attention networks and expands the target domain graph using a feature mapping mechanism. For a fair comparison, the embedding sizes are set to 128 for all baseline methods. The results in Table~\ref{tab:classic_method} show that leveraging side information through LLMs leads to improved performance on federated cross-domain recommendation tasks.

\subsection{Privacy \& Efficiency Analysis}

\subsubsection{Privacy Preserving Analysis}

\label{privacy_analysis}
In the proposed \model, privacy is enhanced through the use of \emph{directional components} of LoRA updates as shared parameters. Specifically, after each client completes its local training, it uploads raw LoRA matrices \( A_j \) and \( B_j \) to the server. Instead of sharing these updates directly, the server performs normalization to extract \emph{directional components} \( \tilde{A}_j \) and \( \tilde{B}_j \), which are then broadcast to all clients. This removes scale information, acting as an inherent obfuscation mechanism.

For stronger protection, normalization can be performed locally before upload so the server never accesses raw $(A_j,B_j)$. Moreover, \model\ is compatible with local differential privacy \cite{8416855}: clients can apply an $\varepsilon$-LDP perturbation to the communicated directional components before uploading, providing an explicit privacy guarantee.
\vspace{-5pt}

\begin{table}[h]
\centering
\small
\caption{Comparison of communication cost per client per round}
\vspace{-10pt}
\scalebox{0.97}{
\begin{tabular}{lccc}
\toprule
\textbf{Method} & \textbf{Upload Cost} & \textbf{Download Cost} & \textbf{Total Cost} \\
\midrule
PFedAvg        & \( \mathcal{O}(rd) \)       & \( \mathcal{O}(rd) \)            & \( \mathcal{O}(rd) \) \\

Ditto        & \( \mathcal{O}(rd) \)       & \( \mathcal{O}(rd) \)            & \( \mathcal{O}(rd) \) \\

FDLoRA        & \( \mathcal{O}(rd) \)       & \( \mathcal{O}(rd) \)            & \( \mathcal{O}(rd) \) \\
FellRec        & \( \mathcal{O}(bh) \) & \( \mathcal{O}(bh) \)    & \( \mathcal{O}(2bh) \) \\
\model~  & \( \mathcal{O}(rd) \)     & \( \mathcal{O}(Krd) \)         & \( \mathcal{O}(Krd) \) \\
\bottomrule
\end{tabular}}
\label{tab:comm_cost}
\vspace{-10pt}
\end{table}

\subsubsection{Communication Cost Analysis}
\label{communication_cost}
In LLM-based federated recommendation systems, communication cost is a major bottleneck, especially when model sizes are large and communication rounds are frequent. We analyze the communication cost of \model~in comparison with several representative baselines: PFedAvg, FellRec, Ditto, FDLoRA, and FedllRec. In each communication round, the communication overhead is determined primarily by exchanging parameters between clients and servers. Let \( d \) denote the dimensionality of a full model layer, and \( r \ll d \) denote the rank of the LoRA update. In \model~, each client uploads its own LoRA matrices \( A_i, B_i \), and downloads the normalized directional components \( \tilde{A}_j, \tilde{B}_j \) from other clients. Although this introduces a slightly higher download cost compared to typical LoRA-based methods, it enables fully personalized aggregation. The per-round communication cost is \( \mathcal{O}(K rd) \), where \( K \) is the number of clients.

\begin{figure*}[ht]
    \centering
    \begin{subfigure}[b]{0.32\textwidth}
        \centering
        \includegraphics[width=\linewidth]{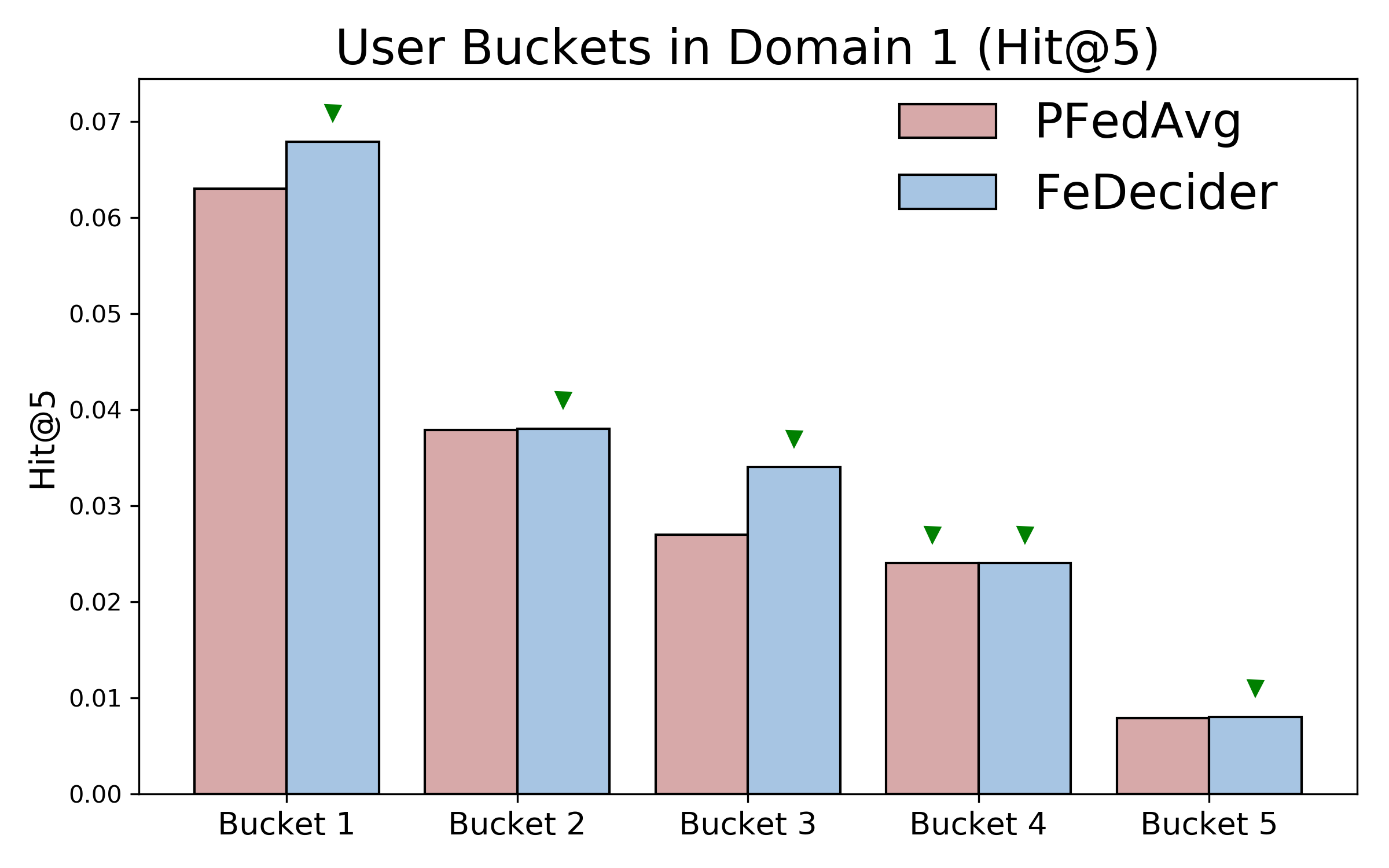}
        % \caption{Cosine Distance Between Item Centers}
      
    \end{subfigure}
    \hfill
    \begin{subfigure}[b]{0.32\textwidth}
        \centering
        \includegraphics[width=\linewidth]{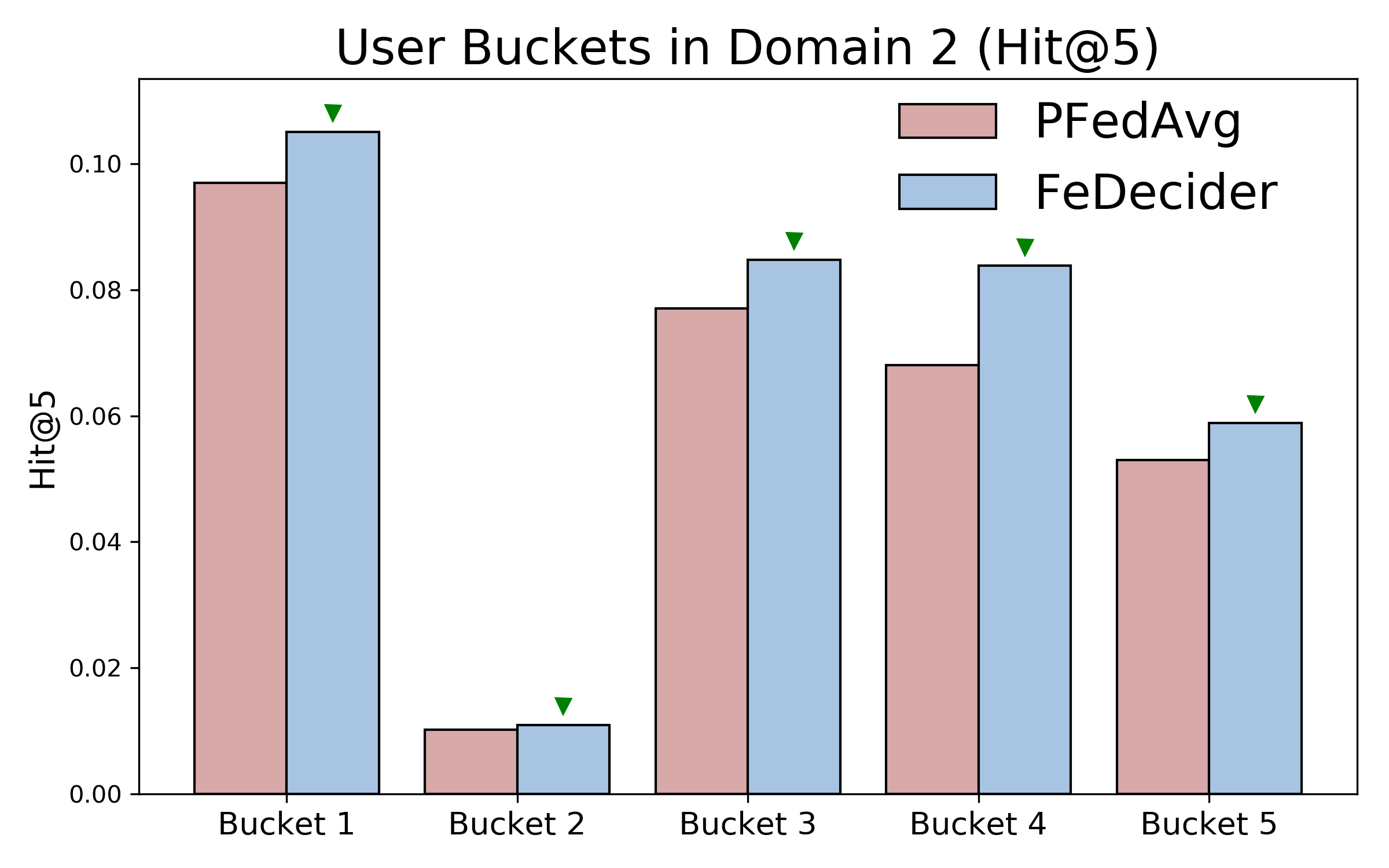}
        % \caption{Cosine Distance Between User Centers}
      
    \end{subfigure}
    \hfill
    \begin{subfigure}[b]{0.32\textwidth}
        \centering
        \includegraphics[width=\linewidth]{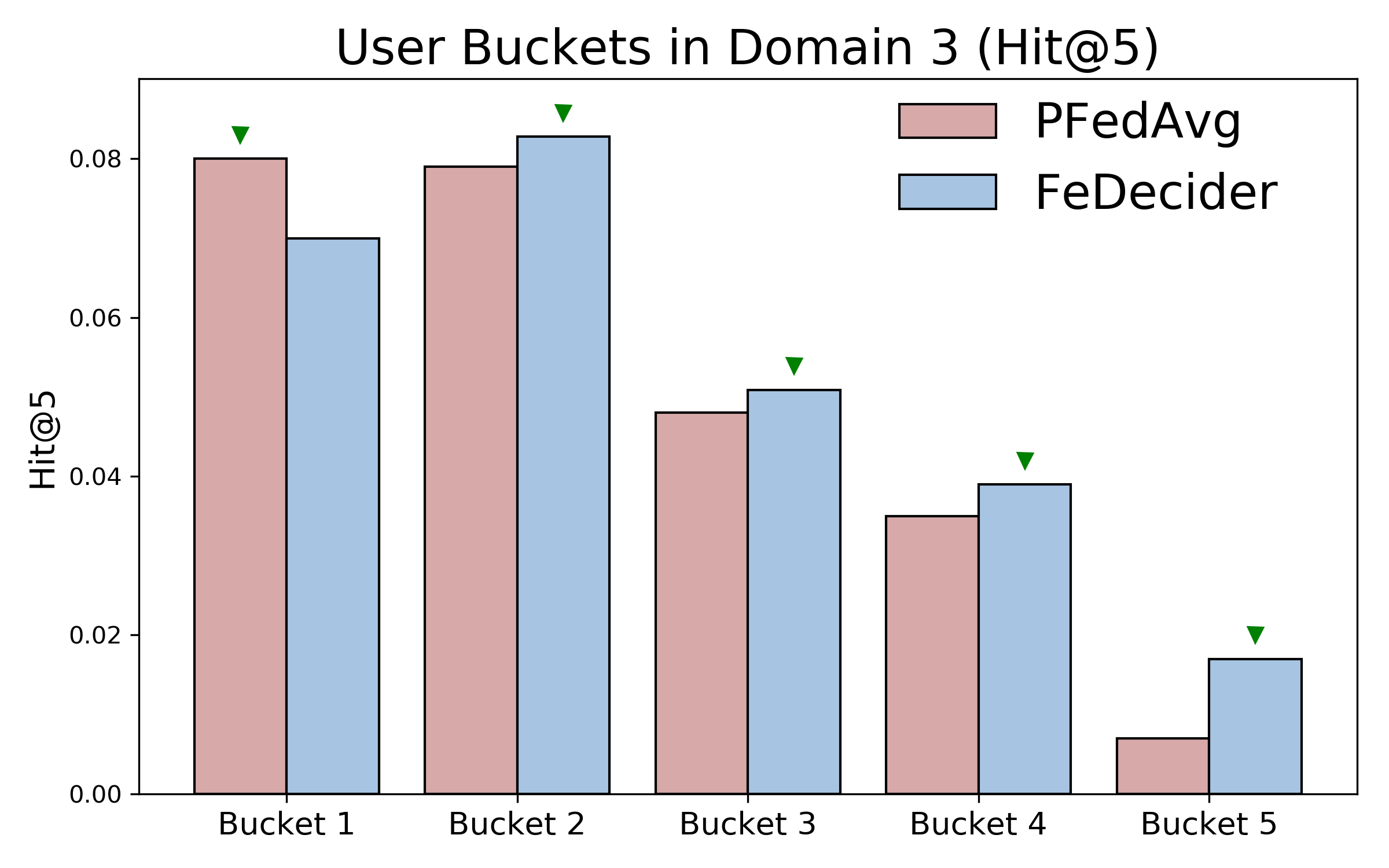}
        % \caption{Personalization Weights \(\alpha\) Learned by Each Domain (Client)}
      
    \end{subfigure}
    \vspace{-10pt}
    \caption{Comparison of Hit@5 across user buckets in domains, with check marks indicating the better-performing method.}
    \label{fig:user_bucket}
    \vspace{-5pt}
\end{figure*}

\begin{figure*}[ht]
    \centering

    \includegraphics[width=0.161\textwidth]{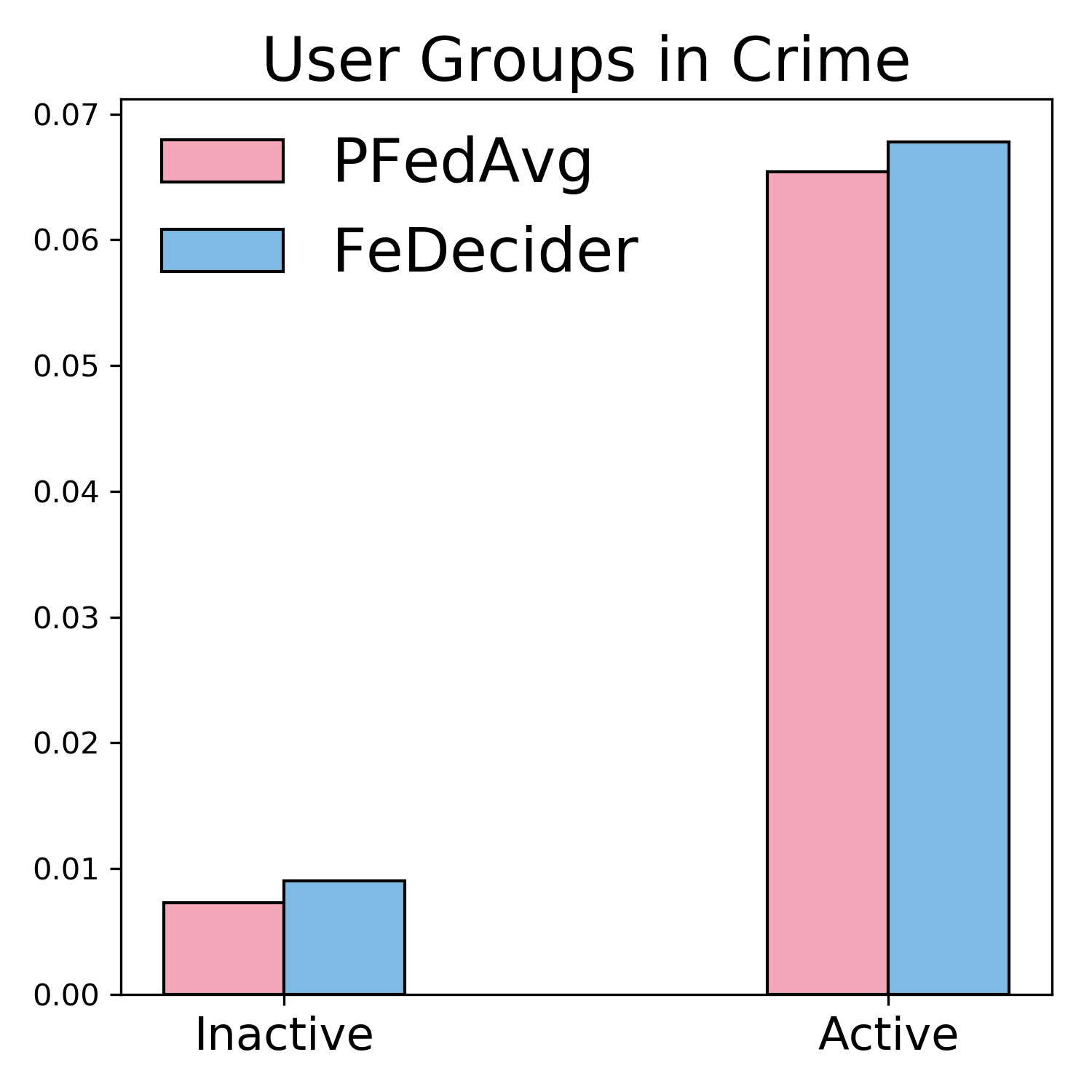}
    \includegraphics[width=0.161\textwidth]{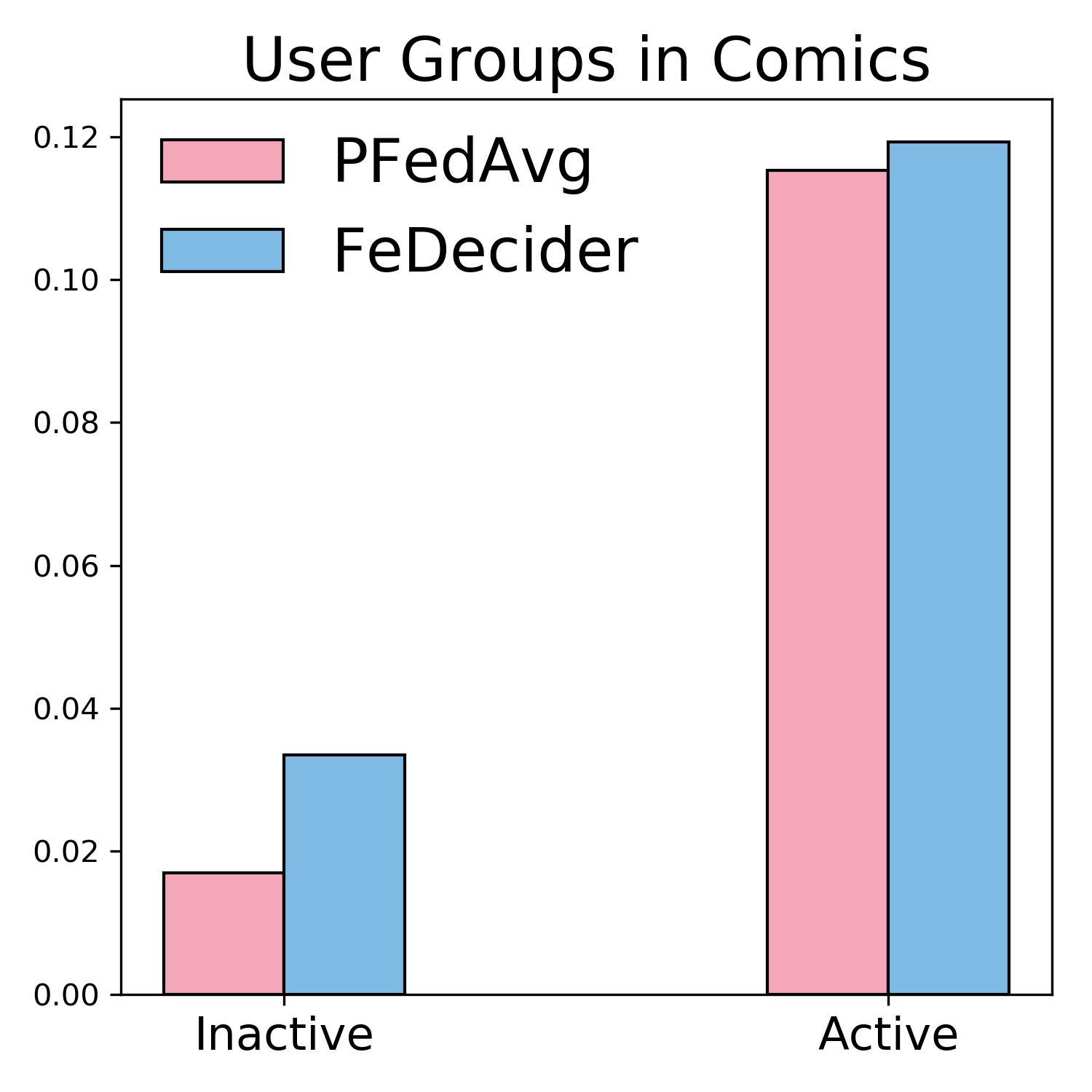}
    \includegraphics[width=0.161\textwidth]{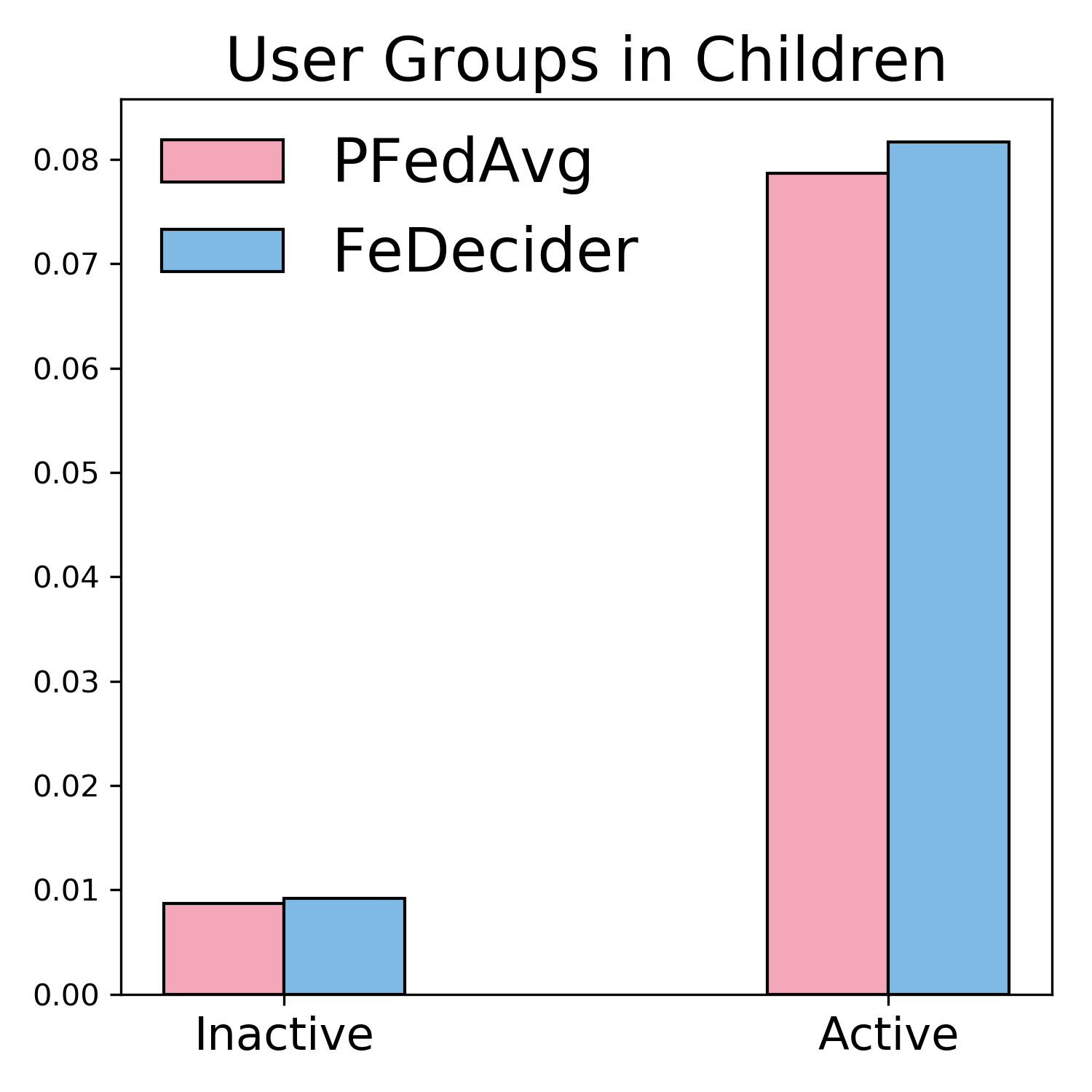}
    \includegraphics[width=0.161\textwidth]{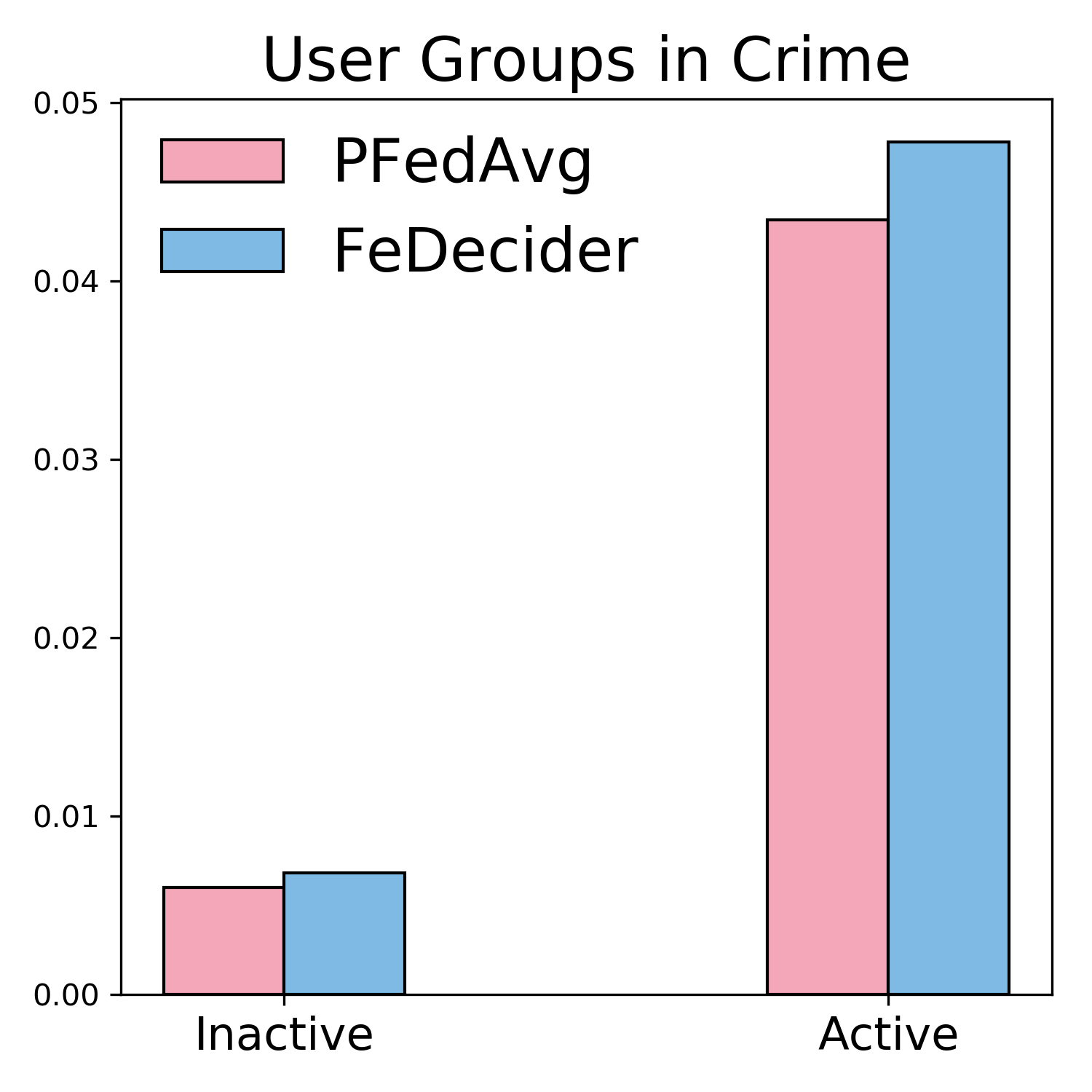}
    \includegraphics[width=0.161\textwidth]{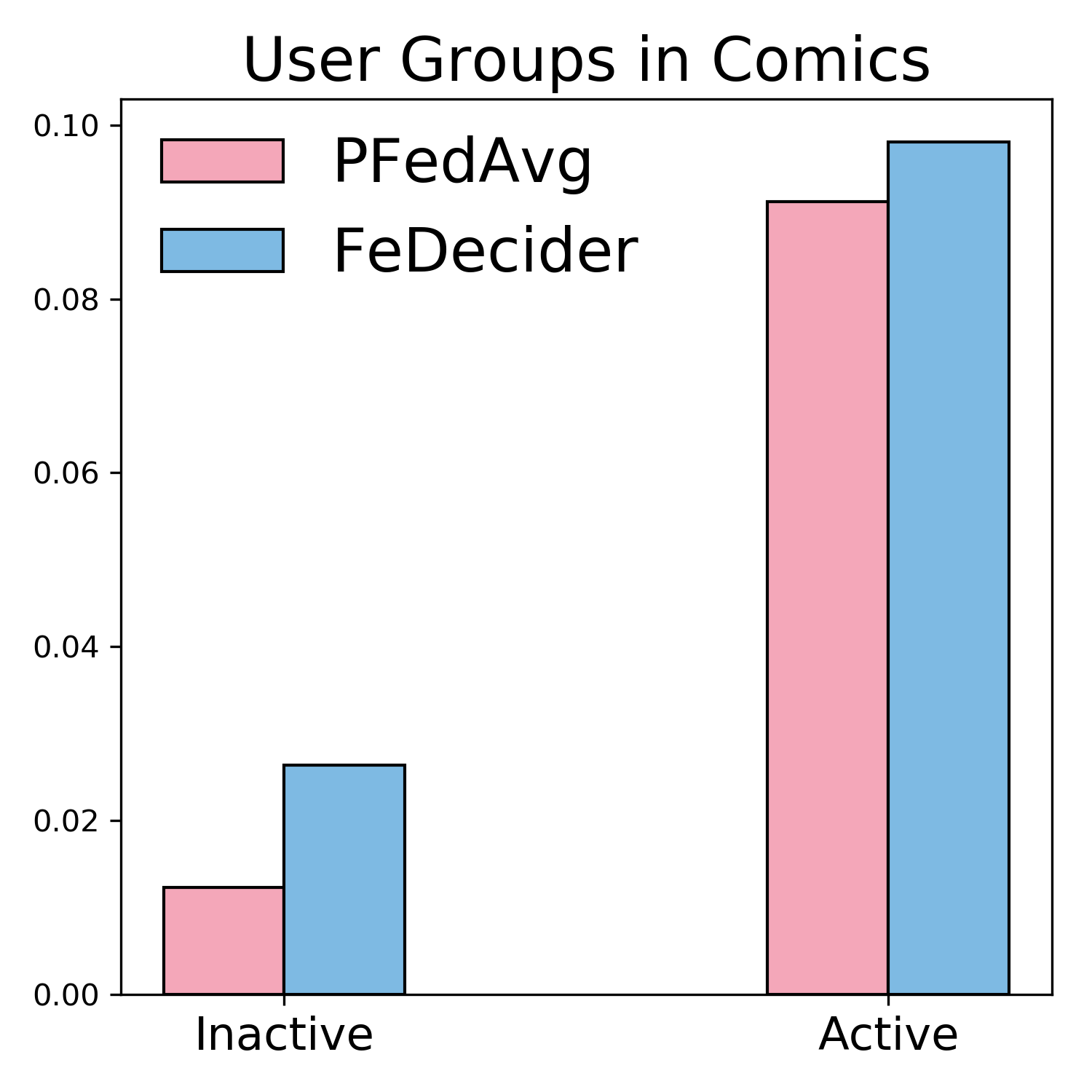}
    \includegraphics[width=0.161\textwidth]{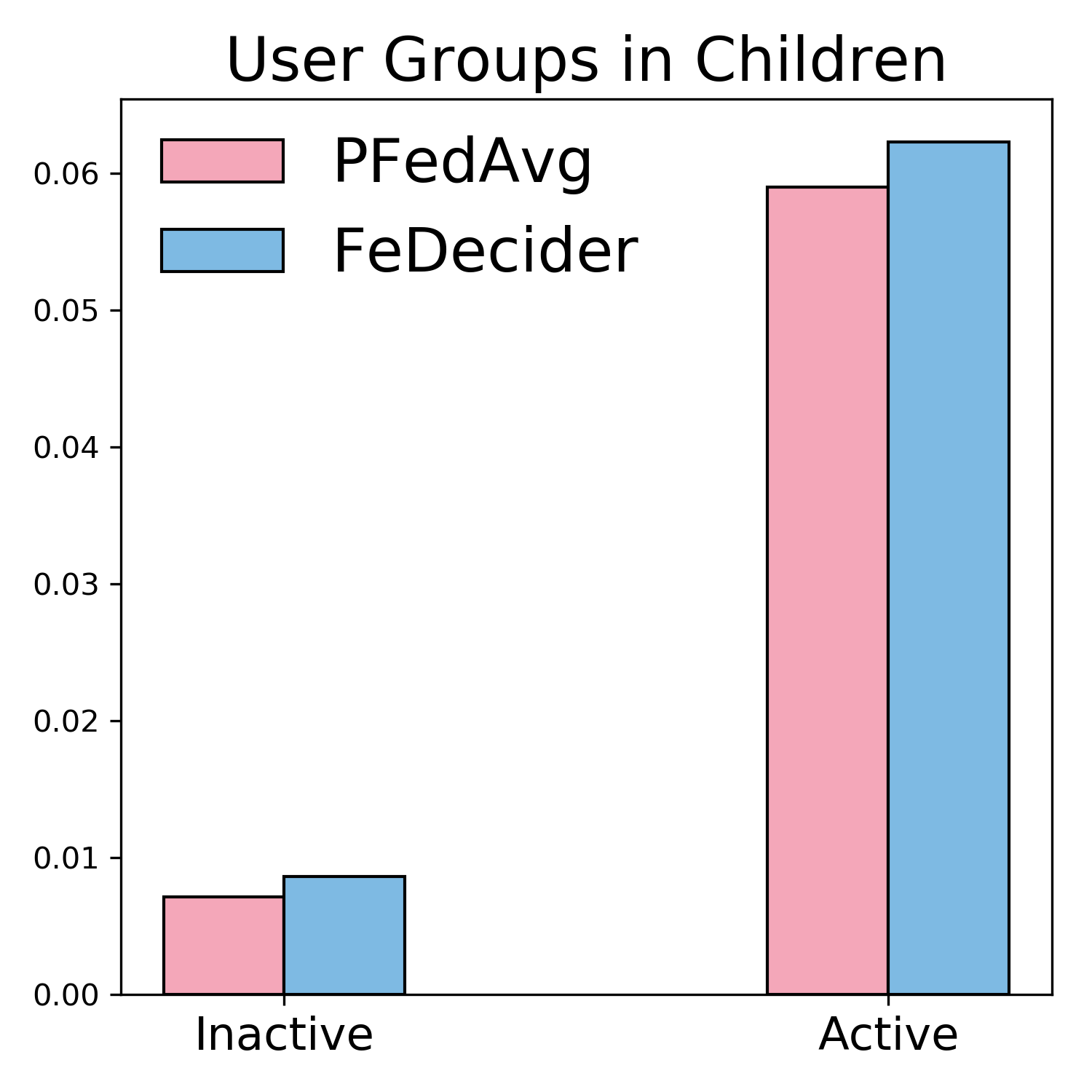}
    \vspace{-10pt}
    \caption{Performance comparison on item groups in three domains. The left three charts show Hit@5, and the right three show NDCG@5.}
    \label{fig:domain_user_item}
    \vspace{-10pt}
\end{figure*}

Compared to other LoRA-based methods (Table~\ref{tab:comm_cost}), \model~incurs a slightly higher download cost due to the need to download shared directional components from all other clients for personalized aggregation. owever, this additional cost is minimal in practice, since federated cross-domain recommendation typically involves only a small number of closely related domains/clients (i.e., \(K\) is small), as incorporating many weakly related domains often brings negative transfer with limited benefit. Overall, \model~achieves a favorable trade-off between communication efficiency and effectiveness in Federated CDR. We provide a discussion of the training time and convergence in Appendix \ref{appendix:computation}.

\vspace{-5pt}
\section{Related Work}
\label{sec:related_works}
\paragraph{Cross-domain Recommendation (CDR) \& Federated CDR} 
Cross-domain recommendation alleviates data sparsity problem by transferring knowledge across related domains \cite{khan2017cross, chen2022differential,10.1145/3640457.3688145}. Early methods like SSCDR \cite{kang2019semi}, EMCDR \cite{man2017cross} learns mapping functions across embeddings, while TMCDR \cite{zhu2021transfer} applies meta-learning for cold-start users. Graph-based models like PPGN \cite{zhao2019cross} and Coast \cite{zhao2023cross} enhance representation learning via structural and contrastive signals.

To enable privacy and personalization, recent works integrate cross-domain recommendation with federated learning. FedCT \cite{liu2021fedct} trains decentralized user embeddings via a federated VAE. FedCDR \cite{yan2022fedcdr} introduces personalized transfer modules. PriCDR \cite{chen2022differential} protects user interactions using random projections such as the Johnson-Lindenstrauss transform. P2FCDR \cite{chen2023win} adopts gating-based dual-target optimization to avoid negative transfer. FedGCDR \cite{yang2024federated} and FedCLR \cite{wang2025federated} further improve knowledge transfer via privacy-preserving mechanisms and contrastive alignment. Despite these efforts, most methods still rely on ID-based representations and traditional recommendation architectures. Few have explored using large language models for federated cross-domain recommendation, which offers new opportunities for semantic-rich and privacy-aware knowledge transfer.

\vspace{-10pt}
\paragraph{LLM-based Federated Learning}

With the success of large language models \cite{achiam2023gpt, liu2024deepseek,he-etal-2025-llm,Zou2025GTRGF}, privacy concerns have raised attention in federated learning over decentralized data \cite{illman2019california}. Early work like FedBERT \cite{tian2022fedbert} explores federated pretraining on BERT \cite{devlin2019bert}. As models grow, recent efforts adopt parameter-efficient finetuning \cite{Ai_2025,ai2025nirvanastructuredpruningreimagined}, notably LoRA \cite{hu2022lora}, to reduce memory and communication costs \cite{ding2023parameter, babakniya2023slora}. FLoRA \cite{wang2024flora} stacks local adapters into global modules, while FFA-LoRA \cite{sun2024improving} freezes initialized weights to cut training overhead. RoLoRA \cite{chen2024robust} improves stability via alternating aggregation. FedLoRA \cite{wu2024fedlora} addresses non-IID data but requires training from scratch, limiting scalability. To improve personalization, FDLoRA \cite{qi2024fdlora} uses dual adapters for global and local knowledge, and FellRec \cite{zhao2025federated} applies LLM-based federated learning in general recommendation scenario. However, little work has explored federated LLM finetuning for cross-domain recommendation, where item and user shifts pose unique challenges for shared knowledge integration and fine-grained personalization under privacy constraints.

\vspace{-5pt}
\section{Conclusion}
\label{sec:conclusion}

We propose \model, an LLM-based framework for federated cross-domain recommendation. \model~addresses key challenges in adopting LLM-based recommendation models in Federated CDR. Inspired by recent insights into the structure of Low-Rank updates, \model~computes directional components from each client's Low-Rank updates and transfers them across clients to mitigate the negative impact of the magnitude (scale) of the domain-specific adapters. Instead of performing aggregation on the server, \model~shifts this process to the client side, where each client performs data-aware integration of others' directional components using a set of client-specific learnable weights. This design allows clients to adaptively incorporate shared knowledge based on their own data distributions. We demonstrate the effectiveness and robustness of \model~through extensive experiments on cross-domain recommendation datasets. Future work includes extending the personalized weights to a layer-wise design and integrating image information for VLM-based multimodal CDR.

\section{Acknowledgements}
This work is supported by National Science Foundation under Award No. IIS-2316233, IIS-2117902. The views and conclusions are those of the authors and should not be interpreted as representing the official policies of the funding agencies or the government.

%%
%% The acknowledgments section is defined using the "acks" environment
%% (and NOT an unnumbered section). This ensures the proper
%% identification of the section in the article metadata, and the
%% consistent spelling of the heading.
% \begin{acks}
% To Robert, for the bagels and explaining CMYK and color spaces.
% \end{acks}

%%
%% The next two lines define the bibliography style to be used, and
%% the bibliography file.
\bibliographystyle{ACM-Reference-Format}
\bibliography{main}

%%
%% If your work has an appendix, this is the place to put it.
\appendix
\section{Theoretical Analysis}

In this section, we present an analysis of the proposed decomposition into shared directions and personalized weights. We first introduce the notation, and then give several lemmas and propositions together with short proofs.

\subsection{Preliminaries}

Let $W_0$ denote the global backbone parameters shared by all clients. For each client $i \in \{1,\dots,N\}$, we write its personalized model as
\begin{equation}
    W_i \;=\; W_0 + \Delta W_i,
    \qquad
    \Delta W_i \;=\; \sum_{j=1}^N \alpha_{ij} D_j,
    \label{eq:deltaW-decomposition}
\end{equation}
where $D_j = \tilde{B}_j \tilde{A}_j$ is the normalized LoRA direction from client $j$ and satisfies
\begin{equation}
    \|D_j\|_F = 1, \qquad j=1,\dots,N,
\end{equation}
and $\alpha_{ij} \in \mathbb{R}$ is the personalized aggregation weight of direction $D_j$ for client $i$.

The local objective of client $i$ is
\begin{equation}
    F_i(W_i) \;=\; \mathbb{E}_{(x,y)\sim \mathcal{D}_i}
    \big[\,\mathcal{L}(f(x;W_i),y)\,\big],
\end{equation}
and its training problem under the above parameterization is
\begin{equation}
    \min_{\{\alpha_{ij}\}}
    F_i\Big(W_0 + \sum_{j=1}^N \alpha_{ij} D_j\Big).
    \label{eq:local-objective-alpha}
\end{equation}

We call the subspace
\begin{equation}
    \mathcal{S} \;:=\; \mathrm{span}\{D_1,\dots,D_N\}
\end{equation}
the shared low-rank update subspace.

\begin{definition}[Harmful and beneficial directions]
For client $i$, define the first-order coefficient of direction $D_j$ at $W_0$ by
\begin{equation}
    g_{ij} \;:=\;
    \big\langle \nabla F_i(W_0),\, D_j \big\rangle_F,
\end{equation}
where $\langle A,B\rangle_F = \mathrm{tr}(A^\top B)$ is the Frobenius inner product.

We say $D_j$ is \emph{beneficial} for client $i$ if $g_{ij} < 0$, and \emph{harmful} for client $i$ if $g_{ij} > 0$.
\end{definition}

The following first-order approximation is standard.

\begin{lemma}[First-order expansion]\label{lem:first-order}
Let $F_i$ be differentiable. For any $\Delta W \in \mathbb{R}^{d\times d}$, we have
\begin{equation}
    F_i(W_0 + \Delta W)
    \;=\;
    F_i(W_0)
    + \big\langle \nabla F_i(W_0),\, \Delta W \big\rangle_F
    + O\big(\|\Delta W\|_F^2\big).
\end{equation}
In particular, if $\Delta W_i$ is given by \eqref{eq:deltaW-decomposition}, then
\begin{equation}
    F_i(W_0 + \Delta W_i)
    =
    F_i(W_0)
    + \sum_{j=1}^N \alpha_{ij} g_{ij}
    + O\Big(\Big\|\sum_{j=1}^N \alpha_{ij} D_j\Big\|_F^2\Big).
    \label{eq:first-order-alpha}
\end{equation}
\end{lemma}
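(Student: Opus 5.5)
The plan is to treat the statement as a Taylor expansion of $F_i$ along the line segment from $W_0$ to $W_0+\Delta W$, and then specialize it using bilinearity of the Frobenius inner product.

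First, the hypothesis needs a remark. Mere differentiability only gives a remainder of $o(\|\Delta W\|_F)$, not $O(\|\Delta W\|_F^2)$. I will therefore read the lemma under the standard smoothness assumption that $\nabla F_i$ is $L$-Lipschitz in a neighborhood of $W_0$ (equivalently, $F_i$ is $L$-smooth there). It would also suffice for $F_i$ to be twice continuously differentiable near $W_0$, since then $L$ can be taken as a local bound on the Hessian norm. This is the only genuinely delicate point, and I would state it explicitly in the proof.

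Under this assumption, set $\phi(t)=F_i(W_0+t\Delta W)$ for $t\in[0,1]$. The chain rule gives $\phi'(t)=\langle \nabla F_i(W_0+t\Delta W),\Delta W\rangle_F$, so by the fundamental theorem of calculus
\begin{equation*}
F_i(W_0+\Delta W)-F_i(W_0)-\langle \nabla F_i(W_0),\Delta W\rangle_F
=\int_0^1 \big\langle \nabla F_i(W_0+t\Delta W)-\nabla F_i(W_0),\,\Delta W\big\rangle_F\,dt .
\end{equation*}
By Cauchy--Schwarz for the Frobenius inner product and the Lipschitz bound $\|\nabla F_i(W_0+t\Delta W)-\nabla F_i(W_0)\|_F\le Lt\|\Delta W\|_F$, the integrand is at most $Lt\|\Delta W\|_F^2$ in absolute value. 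Integrating over $t\in[0,1]$ bounds the remainder by $\tfrac{L}{2}\|\Delta W\|_F^2$, which is the first claim.

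For the specialization, substitute $\Delta W=\Delta W_i=\sum_{j}\alpha_{ij}D_j$ from \eqref{eq:deltaW-decomposition}. Linearity of $\langle \nabla F_i(W_0),\cdot\rangle_F$ gives
\begin{equation*}
\big\langle \nabla F_i(W_0),\textstyle\sum_j \alpha_{ij}D_j\big\rangle_F=\sum_j \alpha_{ij}\,g_{ij},
\end{equation*}
by the definition of $g_{ij}$. The remainder term is unchanged, which yields \eqref{eq:first-order-alpha}.

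Optionally, I would note that $\|D_j\|_F=1$ and the triangle inequality give $\|\Delta W_i\|_F\le\sum_j|\alpha_{ij}|$. The remainder is therefore also $O\big((\sum_j|\alpha_{ij}|)^2\big)$, which is the form that later results about down-weighting harmful directions will likely use.
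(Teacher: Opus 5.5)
Your proof is correct and follows the same route as the paper, whose proof simply cites the standard first-order Taylor expansion; your integral-remainder argument via $\phi(t)=F_i(W_0+t\Delta W)$ and the linearity step for the $\sum_j \alpha_{ij} g_{ij}$ term supply the details that one-line proof leaves implicit. You are right to strengthen the hypothesis to local $L$-smoothness (or $C^2$): under mere differentiability, as assumed in the lemma and the paper's proof, the remainder is only guaranteed to be $o(\|\Delta W\|_F)$ (for example $F(w)=|w|^{3/2}$ at $w=0$, which is even $C^1$), so the $O(\|\Delta W\|_F^2)$ claim genuinely needs your extra assumption.
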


\begin{proof}
This is the standard first-order Taylor expansion of a differentiable function around $W_0$.
\end{proof}

\subsection{Directional Decomposition and Negative Transfer}

For comparison, consider a baseline scheme that directly aggregates the raw LoRA updates $\Delta W_j^{\mathrm{raw}}$ from all clients on the server and broadcasts the same aggregated update to all clients, namely
\begin{equation}
    \Delta W_i^{\mathrm{base}}
    \;=\;
    \sum_{j=1}^N \beta_j \Delta W_j^{\mathrm{raw}},
    \qquad \text{for all } i,
    \label{eq:baseline-agg}
\end{equation}
where $\beta_j$ are aggregation coefficients shared by all clients.

% \begin{remark}[Effect of normalization and separating scale]
For each client $j$, write its raw LoRA update as
\[
    \Delta W_j^{\mathrm{raw}} = s_j  D_j,
    \qquad
    s_j := \|\Delta W_j^{\mathrm{raw}}\|_F,\quad
    \| D_j\|_F = 1.
\]
Then the baseline aggregation \eqref{eq:baseline-agg} can be rewritten as
\[
    \Delta W_i^{\mathrm{base}}
    =
    \sum_{j=1}^N (\beta_j s_j)\, D_j.
\]
Under the first-order approximation in Lemma~\ref{lem:first-order}, the contribution
of direction $D_j$ to the loss of client $i$ is proportional to
$(\beta_j s_j) g_{ij}$, where $g_{ij} = \langle \nabla F_i(W_0), D_j\rangle_F$.
Hence directions coming from clients with large update norm $s_j$ may dominate
the aggregated update on \emph{all} clients, even when $g_{ij} > 0$ and the
direction is harmful for client $i$.

In contrast, the proposed decomposition \eqref{eq:deltaW-decomposition} replaces
$\{\Delta W_j^{\mathrm{raw}}\}$ by normalized directions $D_j$ with
$\|D_j\|_F = 1$ and client-specific coefficients $\alpha_{ij}$. The first-order
term for client $i$ becomes
\[
    F_i(W_0 + \Delta W_i)
    \approx
    F_i(W_0)
    + \sum_{j=1}^N \alpha_{ij} g_{ij},
\]
so the influence of direction $D_j$ on client $i$ is determined solely by the
alignment $g_{ij}$ and its personalized weight $\alpha_{ij}$, rather than by the
raw scale $s_j$ of other clients. This decoupling of direction and scale
prevents large-magnitude but harmful updates from dominating all clients.

We now formalize the representational difference between our decomposition
\eqref{eq:deltaW-decomposition} and the baseline form \eqref{eq:baseline-agg}.

\begin{lemma}[Representation in the shared subspace]\label{lem:representation}
Let $\{D_j\}_{j=1}^N$ be linearly independent. Then for any $\Delta W \in \mathcal{S}$ there exists a unique coefficient vector $\alpha = (\alpha_1,\dots,\alpha_N)$ such that
\begin{equation}
    \Delta W = \sum_{j=1}^N \alpha_j D_j.
\end{equation}
\end{lemma}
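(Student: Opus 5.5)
The proof is elementary linear algebra, and I will handle existence and uniqueness separately.

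\emph{Existence.} This follows directly from the definition $\mathcal{S} = \mathrm{span}\{D_1,\dots,D_N\}$. Every element of the span is, by definition, a finite linear combination of the $D_j$. So for $\Delta W \in \mathcal{S}$ there are scalars $\alpha_1,\dots,\alpha_N \in \mathbb{R}$ with $\Delta W = \sum_{j=1}^N \alpha_j D_j$.

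\emph{Uniqueness.} Suppose $\Delta W = \sum_{j=1}^N \alpha_j D_j = \sum_{j=1}^N \alpha'_j D_j$. Subtracting the two expressions gives
\[
\sum_{j=1}^N (\alpha_j - \alpha'_j) D_j = 0,
\]
an identity in the vector space $\mathbb{R}^{d\times d}$. By the assumed linear independence of $\{D_j\}$, every coefficient must vanish, so $\alpha_j = \alpha'_j$ for all $j$.

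As an optional concrete description of the unique vector, form the Gram matrix $G \in \mathbb{R}^{N\times N}$ with $G_{jk} = \langle D_j, D_k\rangle_F$. Its diagonal is $G_{jj} = \|D_j\|_F^2 = 1$. Linear independence makes $G$ positive definite and hence invertible. Taking Frobenius inner products of $\Delta W = \sum_k \alpha_k D_k$ with each $D_j$ gives $G\alpha = b$, where $b_j = \langle D_j, \Delta W\rangle_F$, so $\alpha = G^{-1} b$. This recovers uniqueness and identifies the coefficients.

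There is no real obstacle in this argument. The only point to note is that the normalization $\|D_j\|_F = 1$ plays no role in the lemma. What matters is that linear independence is preserved under nonzero rescaling: each $D_j$ is a nonzero multiple of the raw update, so independence of the normalized directions is equivalent to independence of the raw updates $\Delta W_j^{\mathrm{raw}}$. The hypothesis is therefore not weakened by normalization.
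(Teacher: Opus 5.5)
Your proof is correct and follows essentially the same route as the paper, which simply notes that linearly independent $D_j$ form a basis of $\mathcal{S}$, so coordinates exist and are unique. Your existence-from-span and subtraction-for-uniqueness steps just spell this out, and the Gram-matrix formula $\alpha = G^{-1}b$ is a valid extra that the paper does not include.
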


\begin{proof}
Since $\{D_j\}_{j=1}^N$ are linearly independent, they form a basis of $\mathcal{S}$. Uniqueness and existence of the coordinates in this basis follow from basic linear algebra.
\end{proof}

\begin{proposition}[Personalization capacity]\label{prop:personalization}
Suppose $\{D_j\}_{j=1}^N$ are linearly independent and let $\mathcal{S}$ be their span. Consider a family of desired personalized updates $\{\Delta W_i^\star\}_{i=1}^N \subset \mathcal{S}$.

\begin{enumerate}
    \item There always exist client-specific coefficients $\alpha_i = (\alpha_{i1},\dots,\alpha_{iN})$ such that
    \begin{equation}
        \Delta W_i^\star = \sum_{j=1}^N \alpha_{ij} D_j,
        \qquad i=1,\dots,N.
    \end{equation}
    \item If there exist clients $i \neq k$ such that $\Delta W_i^\star$ and $\Delta W_k^\star$ are not colinear, i.e.,
    \begin{equation}
        \Delta W_i^\star \not\propto \Delta W_k^\star,
    \end{equation}
    then there is no single coefficient vector $\beta = (\beta_1,\dots,\beta_N)$ and client-specific scalars $c_i$ that satisfy
    \begin{equation}
        \Delta W_i^\star = c_i \sum_{j=1}^N \beta_j D_j,
        \qquad \text{for all } i.
    \end{equation}
\end{enumerate}
\end{proposition}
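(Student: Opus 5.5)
Part 1 follows directly from Lemma~\ref{lem:representation}. Since the $D_j$ are linearly independent, they form a basis of $\mathcal{S}$. For each client $i$ the target $\Delta W_i^\star$ lies in $\mathcal{S}$, so the lemma applied to $\Delta W = \Delta W_i^\star$ gives a unique coordinate vector $\alpha_i = (\alpha_{i1},\dots,\alpha_{iN})$ with $\Delta W_i^\star = \sum_j \alpha_{ij} D_j$. We do this client by client; nothing ties the vectors $\alpha_i$ for different $i$ to one another. This independence is the point: the family $\{\alpha_i\}_{i=1}^N$ is an arbitrary $N\times N$ coefficient matrix, one row per client.

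Part 2 is proved by contradiction. Suppose some $\beta\in\mathbb{R}^N$ and scalars $c_1,\dots,c_N$ satisfy $\Delta W_i^\star = c_i \sum_j \beta_j D_j$ for all $i$. Set $V := \sum_j \beta_j D_j \in \mathcal{S}$; this is a single matrix that does not depend on the client. Then $\Delta W_i^\star = c_i V$ and $\Delta W_k^\star = c_k V$, so every target is a scalar multiple of the same $V$. It remains to show that any two such multiples are colinear, contradicting $\Delta W_i^\star \not\propto \Delta W_k^\star$.

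The only delicate step is fixing what ``colinear'' ($\propto$) means in the presence of zero matrices. We read $X \propto Y$ as ``$X$ and $Y$ lie in a common one-dimensional subspace,'' equivalently ``$\{X,Y\}$ is linearly dependent.'' With this reading, $c_iV$ and $c_kV$ both lie in $\mathrm{span}\{V\}$, which has dimension at most one. So they are linearly dependent in every case: $V=0$, $c_i=0$, or $c_k=0$ included. In particular, if $c_k\neq 0$ and $V\ne 0$ we can write $\Delta W_i^\star = (c_i/c_k)\,\Delta W_k^\star$ explicitly, and the degenerate cases make one of the two matrices zero, which is trivially dependent with the other. I will state this convention explicitly at the start of the proof so that the hypothesis $\Delta W_i^\star \not\propto \Delta W_k^\star$ means exactly ``linearly independent.'' Under that convention the argument is complete, and Part 2 does not use linear independence of the $D_j$.

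To close, I will remark that by Part 1 the personalized scheme can reach any $N$-tuple of targets in $\mathcal{S}$, while the shared scheme, even allowing a per-client rescaling $c_i$, can only reach targets confined to a single line $\mathrm{span}\{V\}$. This is the representational gap the proposition is meant to capture.
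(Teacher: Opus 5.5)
Your proposal is correct and matches the paper's proof: Part 1 applies Lemma~\ref{lem:representation} client by client, and Part 2 uses the same contradiction, in which every target would lie in $\mathrm{span}\{\sum_j \beta_j D_j\}$ and so be pairwise colinear. Your explicit convention that $\propto$ means linear dependence, covering $V=0$ or $c_i=0$, makes precise what the paper's phrase ``one-dimensional subspace'' leaves loose when $V=0$.
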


\begin{proof}
For (1), Lemma~\ref{lem:representation} applied to each $\Delta W_i^\star \in \mathcal{S}$ gives the existence and uniqueness of $\alpha_i$.

For (2), if such $\beta$ and $\{c_i\}$ existed, then all $\Delta W_i^\star$ would lie in the one-dimensional subspace spanned by $\sum_{j=1}^N \beta_j D_j$, hence would be pairwise colinear, contradicting the assumption that there exist $i\neq k$ with $\Delta W_i^\star \not\propto \Delta W_k^\star$.
\end{proof}

Proposition~\ref{prop:personalization} implies that the directional decomposition can realize arbitrary client-specific updates inside the shared subspace $\mathcal{S}$. In contrast, even if we grant a FedAvg-style scheme additional flexibility by allowing each client to rescale a shared aggregated direction with a scalar $c_i$, i.e., $\Delta W_i = c_i \sum_{j=1}^N \beta_j D_j$ with a single shared $\beta$, the achievable updates are still restricted to a one-dimensional family and thus can't realize heterogeneous updates across clients. This is the source of negative transfer when different clients require conflicting update directions.

\subsection{Gradient Characterization of Personalized Weights}
\label{app:personalized_weight}
We now examine how optimizing the weights $\alpha_{ij}$ affects harmful and beneficial directions.

\begin{lemma}[Gradient w.r.t.\ personalized weights]\label{lem:grad-alpha}
Let $\Delta W_i(\alpha_i) = \sum_{j=1}^N \alpha_{ij} D_j$ and $F_i$ be differentiable. Then
\begin{equation}
    \frac{\partial F_i(W_0 + \Delta W_i(\alpha_i))}{\partial \alpha_{ij}}
    =
    \big\langle
        \nabla F_i\big(W_0 + \Delta W_i(\alpha_i)\big),
        \, D_j
    \big\rangle_F.
    \label{eq:alpha-gradient}
\end{equation}
In particular, at $\alpha_i = 0$,
\begin{equation}
    \frac{\partial F_i(W_0)}{\partial \alpha_{ij}}
    \Big|_{\alpha_i = 0}
    =
    g_{ij} = \big\langle \nabla F_i(W_0),\,D_j\big\rangle_F.
\end{equation}
\end{lemma}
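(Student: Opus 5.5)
The statement is a direct application of the chain rule, so the plan is to treat $F_i$ as a differentiable function on $\mathbb{R}^{d\times d}$ composed with the affine map $\alpha_i \mapsto W_0 + \sum_{j} \alpha_{ij} D_j$. Throughout, the directions $D_j$ are held fixed; they are the received or current normalized components. Since we only differentiate with respect to $\alpha_{ij}$, no assumption such as linear independence is needed here, unlike in Lemma~\ref{lem:representation}.

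\textbf{Step 1: partial derivative of the affine map.} Define $\phi(\alpha_i) := W_0 + \Delta W_i(\alpha_i)$. Its partial derivative with respect to a single coordinate is
\[
\partial \phi/\partial \alpha_{ij} = D_j,
\]
because $\phi$ is linear in $\alpha_i$ and only the $j$-th term depends on $\alpha_{ij}$.

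\textbf{Step 2: directional derivative via Lemma~\ref{lem:first-order}.} To stay within results already stated, I would apply the first-order expansion of Lemma~\ref{lem:first-order} at the base point $W := \phi(\alpha_i)$ rather than at $W_0$. That lemma is a Taylor expansion valid around any point where $F_i$ is differentiable. Take the perturbation $\Delta W = t D_j$. Since $\|D_j\|_F = 1$, this gives
\[
F_i(W + tD_j) = F_i(W) + t\langle \nabla F_i(W), D_j\rangle_F + O(t^2).
\]
Dividing by $t$ and letting $t\to 0$ yields exactly \eqref{eq:alpha-gradient}. The key observation is that $W + tD_j = \phi(\alpha_i + t e_j)$, so the left-hand side is precisely the partial derivative of $\alpha_i \mapsto F_i(\phi(\alpha_i))$ in the $j$-th coordinate.

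\textbf{Step 3: specialization at $\alpha_i = 0$.} Setting $\alpha_i = 0$ gives $\phi(0) = W_0$, and the formula reduces to $\langle \nabla F_i(W_0), D_j\rangle_F$. This is $g_{ij}$ by the definition of harmful and beneficial directions.

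\textbf{Main obstacle.} The only subtle point is regularity. Lemma~\ref{lem:first-order} is stated with an $O(\|\Delta W\|_F^2)$ remainder, which implicitly requires something like a locally Lipschitz gradient. Mere differentiability only gives an $o(\|\Delta W\|_F)$ remainder. I would remark that the weaker $o(\cdot)$ remainder, which is exactly Fréchet differentiability, already suffices for the limit in Step 2. Hence the lemma holds under the stated assumption that $F_i$ is differentiable, with no second-order control needed.
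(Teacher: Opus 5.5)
Your proposal is correct and is essentially the paper's argument: the paper invokes the chain rule with $\partial \Delta W_i(\alpha_i)/\partial \alpha_{ij} = D_j$ and then sets $\alpha_i = 0$, and your Step~2 simply unpacks that chain rule through the first-order expansion at $\phi(\alpha_i)$ along $tD_j$. Your regularity remark is also right: mere differentiability gives only an $o(\|\Delta W\|_F)$ remainder, which is all the limit needs, so the $O(\|\Delta W\|_F^2)$ remainder in Lemma~\ref{lem:first-order} is stronger than the lemma's stated hypothesis supports and is not needed here.
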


\begin{proof}
By the chain rule,
\begin{equation*}
    \frac{\partial F_i(W_0 + \Delta W_i(\alpha_i))}{\partial \alpha_{ij}}
    =
    \big\langle
        \nabla F_i\big(W_0 + \Delta W_i(\alpha_i)\big),
        \, \frac{\partial \Delta W_i(\alpha_i)}{\partial \alpha_{ij}}
    \big\rangle_F.
\end{equation*}
Since $\Delta W_i(\alpha_i) = \sum_{k=1}^N \alpha_{ik} D_k$, we have $\partial \Delta W_i(\alpha_i)/\partial \alpha_{ij} = D_j$, which gives \eqref{eq:alpha-gradient}. Evaluating at $\alpha_i = 0$ yields the second claim.
\end{proof}

\begin{table*}[t]

\setlength{\abovecaptionskip}{0.05cm}
\setlength{\belowcaptionskip}{0.2cm}
\caption{Overall performance comparison between the baselines on Amazon Electronics \& Phones. The best results are highlighted in boldface, and the second-best results are underlined.} \label{tab:main-add}
\setlength{\tabcolsep}{3mm}{
\begin{adjustbox}{width=0.8\textwidth,center} 
\begin{tabular}{c|cccc|cccc|c}
\toprule
% \hline
% \textbf{Dataset}&\multicolumn{12}{c}{\textbf{Googreads}}\\ \midrule
 \textbf{Domain} &
 %  \multicolumn{4}{c|}{} &
 %  \multicolumn{4}{c|}{} &
 %  \multicolumn{4}{c}{} \\
 % &
  \multicolumn{4}{c|}{\multirow{-1}{*}{\textbf{Amazon- Electronics}}} &
  \multicolumn{4}{c|}{\multirow{-1}{*}{\textbf{Amazon- Phones}}} &
  \multicolumn{1}{c}{\textbf{Avg}}\\
\multicolumn{1}{c|}{\textbf{Metric}} &
  \multicolumn{1}{c}{\textbf{H@5}} &
  \multicolumn{1}{c}{\textbf{H@10}} &
  \multicolumn{1}{c}{\textbf{N@5}} &
  \multicolumn{1}{c|}{\textbf{N@10}} &
  \multicolumn{1}{c}{\textbf{H@5}} &
  \multicolumn{1}{c}{\textbf{H@10}} &
  \multicolumn{1}{c}{\textbf{N@5}} &
  \multicolumn{1}{c|}{\textbf{N@10}} &
  \multicolumn{1}{c}{\textbf{H@5}}\\   \midrule\midrule

FedAvg &
 0.0114 &
 0.0130&
 0.0092&
0.0097 &
0.0164 &
 0.0194 &
  0.0128&
 0.0137& 0.0139
  \\
PFedAvg &
  0.0122 &
  \underline{0.0150} &
\underline{0.0095}&
  \underline{0.0105} &
  0.0186 &
  0.0248 &
  0.0139 &
  0.0159 & 0.0154
 
  \\
FedProx &
  0.0116 &
  0.0136 &
  0.0088 &
  0.0095 &
  0.0204 &
  \underline{0.0262} &
  0.0146 &
  0.0164 & 0.0160

  \\
Ditto &
  \underline{0.0124} &
  \underline{0.0150} &
  \underline{0.0095} &
  0.0103 &
  \underline{0.0205} &
  0.0258 &
  \underline{0.0148} &
  \underline{0.0166} & \underline{0.0164}

 \\ \midrule
\model &
\textbf{0.0130} &
\textbf{0.0156} &
\textbf{0.0096} &
\textbf{0.0106} &
\textbf{0.0216} &
  \textbf{0.0272} &
  \textbf{0.0154} &
  \textbf{0.0169} & \textbf{0.0173} \\  \bottomrule
\end{tabular}
\end{adjustbox}}
\end{table*}
\begin{proposition}[Gradient descent down-weights harmful directions]\label{prop:downweight}
Fix client $i$ and directions $\{D_j\}_{j=1}^N$. Consider gradient descent on $\alpha_i$ with step size $\eta_\alpha>0$:
\begin{equation}
    \alpha_{ij}^{(t+1)}
    =
    \alpha_{ij}^{(t)}
    - \eta_\alpha
    \frac{\partial F_i(W_0 + \Delta W_i(\alpha_i^{(t)}))}{\partial \alpha_{ij}}.
\end{equation}
Assume that at some iterate $\alpha_i^{(t)}$ we have
\begin{equation}
    \mathrm{sign}\!\left(
        \big\langle
            \nabla F_i\big(W_0 + \Delta W_i(\alpha_i^{(t)})\big),
            \, D_j
        \big\rangle_F
    \right)
    =
    \mathrm{sign}(g_{ij}).
\end{equation}
Then:
\begin{itemize}
    \item If $g_{ij} > 0$ (locally, moving along $+D_j$ increases $F_i$ at $W_0$, i.e., direction $D_j$ is harmful), then 
    \begin{equation}
        \alpha_{ij}^{(t+1)} < \alpha_{ij}^{(t)}
    \end{equation}.
\item If $g_{ij} < 0$ (locally, moving along $+D_j$ decreases $F_i$ at $W_0$, i.e., direction $D_j$ is beneficial), then 
\begin{equation}
    \alpha_{ij}^{(t+1)} > \alpha_{ij}^{(t)}
\end{equation}

\end{itemize}
\end{proposition}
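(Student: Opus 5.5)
The argument is a direct sign computation using Lemma~\ref{lem:grad-alpha}. First, I invoke that lemma at the iterate $\alpha_i^{(t)}$ to identify the partial derivative appearing in the update:
\[
\frac{\partial F_i(W_0 + \Delta W_i(\alpha_i^{(t)}))}{\partial \alpha_{ij}}
= \big\langle \nabla F_i\big(W_0 + \Delta W_i(\alpha_i^{(t)})\big),\, D_j\big\rangle_F =: h_{ij}^{(t)}.
\]
Substituting this into the gradient step gives $\alpha_{ij}^{(t+1)} - \alpha_{ij}^{(t)} = -\eta_\alpha h_{ij}^{(t)}$.

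Second, I use the sign-consistency hypothesis, $\mathrm{sign}(h_{ij}^{(t)}) = \mathrm{sign}(g_{ij})$. There are two cases.

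\begin{enumerate}
\item If $g_{ij} > 0$, then $h_{ij}^{(t)} > 0$. Since $\eta_\alpha > 0$, the increment $-\eta_\alpha h_{ij}^{(t)}$ is strictly negative, so $\alpha_{ij}^{(t+1)} < \alpha_{ij}^{(t)}$.
\item If $g_{ij} < 0$, then $h_{ij}^{(t)} < 0$, the increment is strictly positive, and $\alpha_{ij}^{(t+1)} > \alpha_{ij}^{(t)}$.
\end{enumerate}

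Strictness holds because the sign function assigns $0$ only to $0$. Since $g_{ij}$ is nonzero in each case, the hypothesis forces $h_{ij}^{(t)} \neq 0$ as well.

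There is no real obstacle, because the hypothesis carries all the content. The only step needing care is ensuring that Lemma~\ref{lem:grad-alpha} applies at an arbitrary iterate, not just at $\alpha_i = 0$. This follows from the differentiability of $F_i$ and the linearity of $\alpha_i \mapsto \Delta W_i(\alpha_i)$.

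I would add a brief remark on when the sign hypothesis is plausible. If $\nabla F_i$ is $L$-Lipschitz and $\|\Delta W_i(\alpha_i^{(t)})\|_F < |g_{ij}|/L$, then by Cauchy--Schwarz and $\|D_j\|_F = 1$,
\[
|h_{ij}^{(t)} - g_{ij}| \le L\,\|\Delta W_i(\alpha_i^{(t)})\|_F < |g_{ij}|,
\]
which preserves the sign. This remark links the result to the first-order regime of Lemma~\ref{lem:first-order}, but it is not needed for the proposition as stated.
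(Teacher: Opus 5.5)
Your proof is correct and follows the paper's argument: apply Lemma~\ref{lem:grad-alpha} at the current iterate, substitute it into the gradient step, and read off the sign of the increment from the sign-consistency hypothesis. Your Lipschitz remark, which gives a sufficient condition for that hypothesis, is also valid; the paper does not include it.
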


\begin{proof}
By Lemma~\ref{lem:grad-alpha},
\begin{equation*}
    \alpha_{ij}^{(t+1)}
    =
    \alpha_{ij}^{(t)}
    - \eta_\alpha
    \big\langle
        \nabla F_i\big(W_0 + \Delta W_i(\alpha_i^{(t)})\big),
        \, D_j
    \big\rangle_F.
\end{equation*}
If the inner product is positive (harmful direction), the update decreases $\alpha_{ij}$; if it is negative (beneficial direction), the update increases $\alpha_{ij}$. The sign assumption ensures consistency with $g_{ij}$ at $W_0$.
\end{proof}

Proposition~\ref{prop:downweight} shows that, under gradient-based optimization, personalized weights $\alpha_{ij}$ automatically suppress directions that increase the loss on client $i$ and emphasize those that decrease it. Combined with the higher personalization capacity in Proposition~\ref{prop:personalization}, this explains why the proposed decomposition of LoRA updates into shared directions and personalized weights is able to mitigate negative transfer across heterogeneous clients.

\begin{table}[h]
\centering
\caption{Statistics of the preprocessed GoodReads and Amazon datasets.}
\label{tab:dataset-stats}
\scalebox{0.80}{ 
\begin{tabular}{lcccc}
\toprule
\textbf{Domain} & \textbf{\#Users} & \textbf{\#Items} &  \textbf{\#Avg Inter.}\\
\midrule
Goodreads - Crime     & 5,000 & 31031 & 21.42 \\
Goodreads - Comics    & 5,000 & 15636  & 24.31\\
Goodreads - Children  & 5,000 & 21422 & 26.88 \\
Amazon - Beauty       & 4,000 & 10077   & 8.92\\
Amazon - Clothing     & 4,000 & 14087 & 9.35\\
Amazon - Electronics       & 5,000 & 8661   & 6.87\\
Amazon - Phones     & 5,000 & 23037 & 9.89\\
\bottomrule
\end{tabular}
}
\end{table}
\section{Experimental Setting}

\subsection{Datasets}
\label{appendix:data}
We introduce our datasets in detail here. As mentioned in the paper, we conduct experiments on the Goodreads and Amazon review datasets, covering three domains from Goodreads (\textit{Crime}, \textit{Comics}, \textit{Children}) and two domain pairs from Amazon (\textit{Beauty} \& \textit{Clothing}, \textit{Electronics} \& \textit{Phones}). For each user in the dataset, we use the last interaction for testing, the second-to-last for validation, and the rest for training. For each domain, we randomly sample 5,000 users from the Goodreads dataset and 4,000 users from the Amazon dataset.  Table~\ref{tab:dataset-stats} summarizes the statistics of each dataset after preprocessing.

\subsection{Efficient Local Storage Implementation}
\label{storage_efficiency}
Due to the large size of generative recommendation models, many local clients may not have sufficient storage or computational resources to host the full model. To address this, in our implementation, we adopt a two-step strategy to prepare a lightweight but effective local model.

First, we finetune the pre-trained base model \( \mathcal{M}_{\text{base}} \) on a small recommendation dataset \( \mathcal{D}_{\text{pretrain}} \) (the pre-training datasets mentioned in Implementation Details in Sec 4.1), resulting in a new model \( \mathcal{M}_{\text{ft}} \). This step helps the model become familiar with the input format and content structure of downstream recommendation tasks.

Second, inspired by prior distillation-based compression techniques~\cite{xuslmrec}, we compress \( \mathcal{M}_{\text{ft}} \) into a smaller local model \( \mathcal{M}_{\text{local}} \) by reducing its depth. Specifically, for every two layers \( f_{2i-1} \) and \( f_{2i} \) in \( \mathcal{M}_{\text{ft}} \), we construct a single layer \( g_i \) in \( \mathcal{M}_{\text{local}} \) to approximate their combined transformation:
\[
g_i(x) \approx f_{2i}(f_{2i-1}(x))
\]

To preserve model behavior during this compression, we apply \textit{intermediate-layer supervision}. Instead of matching only the final outputs, we minimize the discrepancy between the output of each compressed layer \( g_i \) and the corresponding two-layer composition \( f_{2i} \circ f_{2i-1} \) in the original model. The distillation objective is defined as:
\[
\mathcal{L}_{\text{distill}} = \sum_{(x, y) \in \mathcal{D}_{\text{pretrain}}} \sum_{i=1}^{L/2} \left\| g_i(h_i) - f_{2i}(f_{2i-1}(h_i)) \right\|_2^2
\]
where \( h_i \) denotes the input to the \((2i-1)\)-th layer in \( \mathcal{M}_{\text{ft}} \), and \( L \) is the total number of layers in \( \mathcal{M}_{\text{ft}} \).

This distillation process enables the compressed model to retain the expressive power of the original one while significantly reducing the local storage and computation overhead.

\section{Additional Experiment Results}
\label{appendix:add_results}
To further validate the effectiveness of our proposed method, we present the results on \textit{Amazon Electronics $\&$ Phones} and compare it with several strong baselines. As shown in Table~\ref{tab:main-add}, \model{} consistently outperforms the baselines, including FedAvg, PFedAvg, FedProx, and Ditto, across all evaluation metrics on both domains.

\begin{table}[h]
\centering
\caption{Comparison of average local computation cost and convergence efficiency across methods.}
\label{tab:computation_cost}
\scalebox{1}{
\begin{tabular}{lcc}
\toprule
\textbf{\footnotesize Method} &
\makecell{\footnotesize \textbf{Avg. Local}\\[-2pt]\footnotesize \textbf{Train Time / Round (s)}} &
\makecell{\footnotesize \textbf{Converge}\\[-2pt]\footnotesize \textbf{Round}}
 \\
\midrule
PFedAvg        & 688& 15 \\

Ditto           & 1665 & 7 \\

FDLoRA          & 1185 & 10 \\
FellRec            & 850 & 16 \\
\model~     & 746 & 12  \\
\bottomrule
\end{tabular}}
\end{table}

\section{Computation and Convergence Efficiency}

\label{appendix:computation}
Table~\ref{tab:computation_cost} reports the average local training time per round and the number of rounds to convergence for different frameworks. Compared with existing baselines, \model~achieves comparable convergence efficiency while maintaining a lower local training cost. In particular, Ditto requires dual-model optimization, and FDLoRA decouples local model learning from fusion-function learning, both of which increase local computation. 
This demonstrates that \model~strikes a favorable balance between personalization and computational efficiency, enabling scalable deployment under limited client resources.

\end{document}